\documentclass[sigconf]{acmart}

\usepackage{algorithm}
\usepackage{algorithmic}
\usepackage{multirow}

\newtheorem{assumption}{Assumption}
\newtheorem{property}{Property}
\newtheorem{definition}{Definition}
\newtheorem{theorem}{Theorem}

\acmYear{2026}
\acmConference[DAI '26]
  {The Eighth International Conference on Distributed Artificial Intelligence}
  {November 29--December 2, 2026}
  {Hong Kong, China}
\acmBooktitle{The Eighth International Conference on Distributed Artificial
  Intelligence (DAI '26), November 29--December 2, 2026, Hong Kong, China}

\setcopyright{none}
\copyrightyear{}
\acmDOI{}
\acmISBN{}

\begin{document}

\title[CARVS: A Congestion-Aware Algorithm for Online MAPD in Human-Collaborative Warehouses]{CARVS: A Congestion-Aware Algorithm for Online Multi-Agent Pickup-and-Delivery in Human-Collaborative Warehouses}

\author{Chaoran Li}
\affiliation{%
  \institution{Shanghai Jiao Tong University}
  \department{Department of Computer Science and Engineering}
  \city{Shanghai}
  \country{China}
}
\email{seraphli@sjtu.edu.cn}

\author{Zeyuan Zhao}
\affiliation{%
  \institution{Shanghai Jiao Tong University}
  \department{Department of Computer Science and Engineering}
  \city{Shanghai}
  \country{China}
}

\author{Shao Zhang}
\affiliation{%
  \institution{Shanghai Artificial Intelligence Laboratory}
  \city{Shanghai}
  \country{China}
}

\renewcommand{\shortauthors}{Li et al.}

\begin{abstract}
Intelligent warehouse systems have transformed logistics by integrating robots with minimal infrastructure changes, boosting efficiency and optimizing workflows compared to traditional human-only operations. However, human-robot collaboration introduces new challenges, particularly due to human uncertainty, which leads to dynamic interactions and unpredictable behavior. To address these complexities, this paper presents the Multi-Agent Pickup-and-Delivery with Human Collaboration (MAPDwHC) problem, an extension of the Multi-Agent Pickup-and-Delivery (MAPD) problem that incorporates human unpredictability. This problem adds key constraints, such as \emph{dynamic obstacles} caused by robots temporarily blocking paths when interacting with humans, and the need for \emph{real-time responsiveness}. We formalize the MAPDwHC problem with additional constraints and properties. Based on this problem setting, we propose the CARVS algorithm, which improves task assignment and path planning by integrating task swapping, validation checks, and random initialization. CARVS reduces congestion, enhances adaptability to human-robot interactions, and ensures real-time responsiveness. Simulation results show that CARVS significantly reduces timeouts of path planning in large-scale scenarios where traditional MAPD algorithms fail, improving both efficiency and scalability.
\end{abstract}

\begin{CCSXML}
<ccs2012>
 <concept>
  <concept_id>10010147.10010178.10010199.10010202</concept_id>
  <concept_desc>Computing methodologies~Multi-agent planning</concept_desc>
  <concept_significance>500</concept_significance>
 </concept>
 <concept>
  <concept_id>10010147.10010178.10010219.10010220</concept_id>
  <concept_desc>Computing methodologies~Multi-agent systems</concept_desc>
  <concept_significance>300</concept_significance>
 </concept>
</ccs2012>
\end{CCSXML}
\ccsdesc[500]{Computing methodologies~Multi-agent planning}
\ccsdesc[300]{Computing methodologies~Multi-agent systems}

\keywords{Multi-agent systems, pickup and delivery, human-robot collaboration, path finding, warehouse automation, task assignment}

\maketitle

\section{Introduction}

The rise of automation in logistics, particularly in warehouse management, has driven the development of advanced multi-agent systems \cite{kouIdleTimeOptimization2020,zhangCollisionfreeRoutePlanning2018}.
These systems are crucial for enhancing efficiency and managing the complexity of automated warehouses \cite{duinkerkenRoutingAGVsAutomated2015}.
In particular, Multi-Agent Path Finding (MAPF) \cite{sternMultiAgentPathfindingDefinitions2019} plays a key role in coordinating robot movements to prevent conflicts and optimize path efficiency \cite{maMultiAgentPathFinding2016}.
While significant progress has been made in MAPF research \cite{liAnytimeMultiAgentPath2021,shenTrackingProgressMultiAgent2023}, the integration of task assignment within this framework remains underexplored.
To address this, Multi-Agent Pickup and Delivery (MAPD) algorithms \cite{maLifelongMultiagentPath2017} were developed, combining task assignment and pathfinding to enhance system adaptability and throughput.
Since MAPF is NP-hard \cite{maMultiAgentPathFinding2016} and MAPD generalizes MAPF, MAPD is also NP-hard, complicating system-wide optimization. To reduce computational complexity, traditional approaches often separate task assignment from path planning \cite{chenIntegratedTaskAssignment2021}.
While A*-based algorithms \cite{Hart1968Astar,gongSafeEfficientHuman2023} optimize task assignment by minimizing travel time, they frequently overlook warehouse congestion, agent status, and environmental constraints, resulting in poor real-time adaptability and delayed responses to dynamic changes.

\begin{figure}[!t]
    \centering
    \includegraphics[width=\columnwidth]{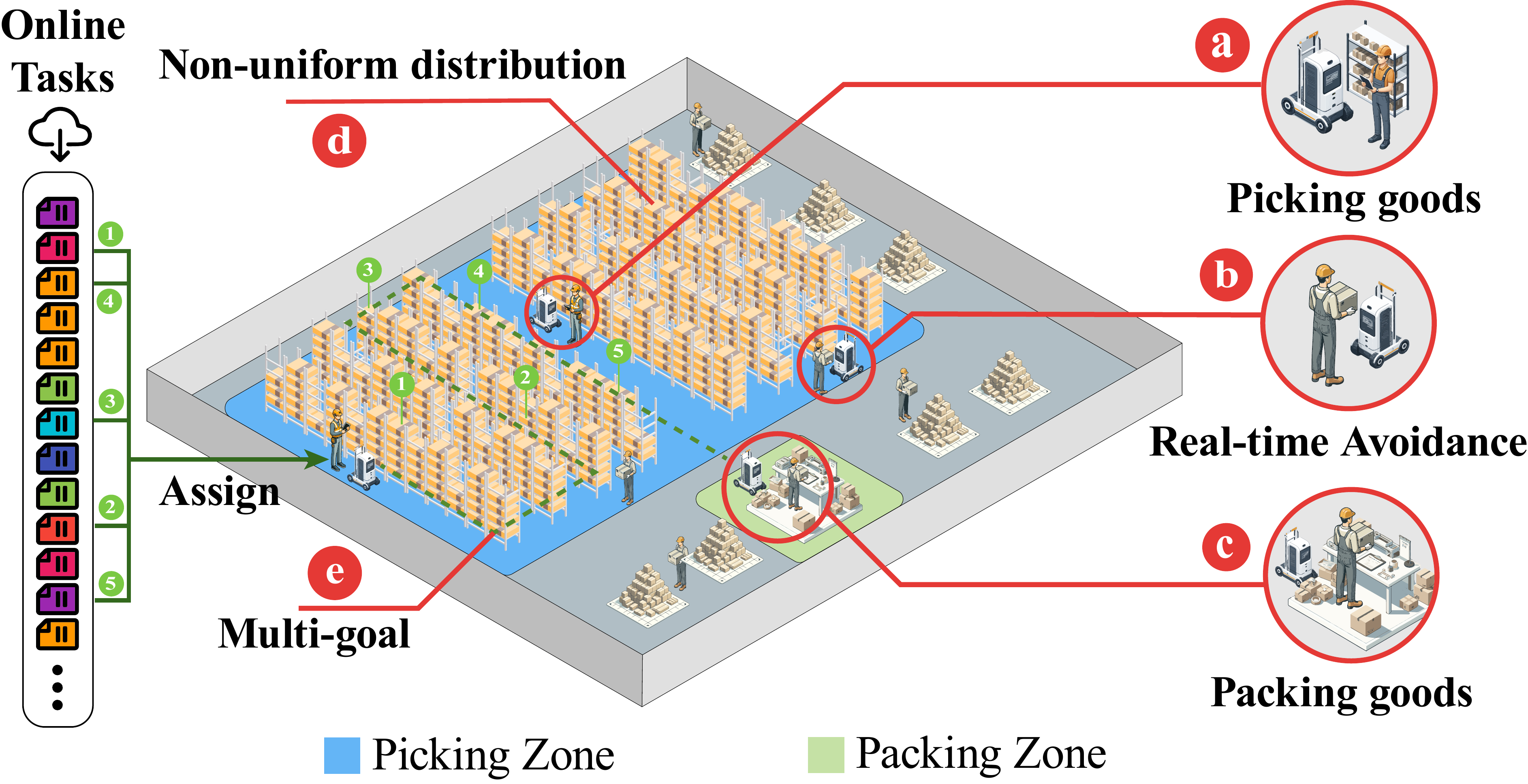}
    \caption{An illustration of the online MAPDwHC problem, showing key features: (a, c) Robots must stay stationary during human interaction, creating \emph{dynamic obstacles}. (b) Robots must avoid humans and obstacles, necessitating algorithms with \emph{real-time responsiveness}. (d) Items are unevenly distributed. (e) Robots can carry multiple items.}
    \label{fig:problem}
    \Description{An illustration of the online MAPDwHC problem, showing key features: (a, c) Robots must stay stationary during human interaction, creating \emph{dynamic obstacles}. (b) Robots must avoid humans and obstacles, necessitating algorithms with \emph{real-time responsiveness}. (d) Items are unevenly distributed. (e) Robots can carry multiple items.}
\end{figure}

Despite these advancements, MAPD assumes an environment free of human workers, limiting its real-world applicability.
Modern intelligent warehouse systems \cite{gongSafeEfficientHuman2023} have transformed logistics by integrating Autonomous Mobile Robots (AMRs) \cite{loganathanSystematicReviewRecent2023} without requiring major infrastructure changes.
Full warehouse automation requires standardized shelving and uniform packaging, which is prohibitively expensive for most existing facilities.
While AMRs handle routine transport tasks, human workers perform critical operations such as picking and packing, which require dexterity to handle items of varying sizes.
During these operations, robots must remain stationary while interacting with humans, temporarily forming \textbf{dynamic obstacles} (see Figure~\ref{fig:problem}).
Warehouse aisles are typically narrow to maximize storage capacity, so a stationary agent during interaction can block paths for other agents, rendering some MAPF instances unsolvable.
Unlike robots, humans do not strictly follow predefined schedules, resulting in unpredictable behavior that affects task allocation and path planning.
Consequently, algorithms need \textbf{real-time responsiveness} to respond dynamically to these changing conditions.
These constraints significantly heighten the problem's challenge and necessitate adaptive algorithms capable of reacting to real-time changes.

Existing MAPD algorithms are insufficient for addressing these challenges, as they assume static environments without human involvement.
Adding human collaboration imposes time-coupled constraints that break standard MAPD solvers, reducing task completion to less than 10\% of each solver's own peak in some scenarios (see Section \ref{sec:experiment}).
Traditional MAPD algorithms \cite{maLifelongMultiagentPath2017,liuTaskPathPlanning2019} fail to account for human movement and dynamic constraints.
Addressing these limitations requires a novel approach that integrates real-time adaptability with congestion-aware planning to ensure effective coordination between AMRs and human workers.

To bridge this gap, we formalize the Multi-Agent Pickup-and-Delivery with Human Collaboration (MAPDwHC) problem, incorporating additional constraints such as \textbf{dynamic obstacles} and \textbf{real-time adaptability}.
We introduce \emph{CARVS} (\textbf{C}ongestion-\textbf{A}ware \textbf{R}andom-initialization, \textbf{V}alidation, and \textbf{S}wapping), a new algorithm designed to enhance task assignment and path planning in human-robot collaborative environments.
CARVS extends a baseline greedy assignment with three key improvements: random initialization, validation checks, and task swapping.
These enhancements mitigate congestion, adapt to human unpredictability, and improve real-time responsiveness.
Our theoretical analysis establishes the solvability of MAPDwHC, and extensive simulations demonstrate that CARVS significantly outperforms existing methods in terms of completed tasks.
This work advances the field of multi-agent systems and paves the way for optimizing human-robot collaboration in real-world logistics applications.

Our contributions are as follows:
\begin{itemize}
    \item We formalize the MAPDwHC problem and demonstrate that incorporating human collaboration heightens its computational difficulty relative to the traditional MAPD problem.
    \item We introduce \emph{CARVS}, an algorithm that integrates random initialization, validation checks, and task swapping to mitigate congestion, adapt to human-robot interactions, and maintain real-time responsiveness.
    \item Through extensive simulations, we show that \emph{CARVS} outperforms existing methods in terms of completed tasks, underscoring its practical potential in human-robot collaborative warehouse systems.
\end{itemize}

\section{Related Works}

\subsection{Multi-Agent Path Finding}

In Multi-Agent Pathfinding (MAPF), algorithms are broadly classified into optimal and suboptimal categories.
Optimal algorithms like CBS \cite{sharonConflictBasedSearchOptimal2012} and MA-CBS \cite{sharonMetaAgentConflictBasedSearch2012} aim to minimize the total cost but are computationally expensive, making them impractical for real-time applications with many agents.
As a result, suboptimal algorithms have become the focus of MAPF research, prioritizing runtime efficiency over strict optimality.

Suboptimal algorithms are typically divided into complete and incomplete methods.
Incomplete approaches, such as Rolling-Horizon Collision Resolution (RHCR) \cite{liLifelongMultiAgentPath2021}, simplify planning by resolving conflicts within limited timeframes but can suffer from deadlocks in constrained environments due to their non-exhaustive search.
On the other hand, complete suboptimal algorithms like LaCAM \cite{okumuraLaCAMSearchBasedAlgorithm2022} and LaCAM* \cite{okumuraImprovingLaCAMScalable2023} use dual-level search techniques that guarantee finding a feasible solution or proving its non-existence, though they can result in significant computational delays, particularly in complex scenarios like those encountered in MAPDwHC.

Recent studies \cite{lauMultiAgentPickupDelivery2022,damaniPRIMAL2PathfindingReinforcement2021} have applied multi-agent reinforcement learning (MARL) \cite{wangOrderMattersAgentbyagent2023} to MAPF problems, but these methods often struggle to meet the runtime requirements necessary for real-time human-robot collaboration in the MAPDwHC problem.
While MARL can improve adaptability and decision-making, it still faces challenges in balancing computational demands with the need for real-time solutions.

\subsection{Task Assignment}

Task assignment is often overlooked in traditional Multi-Agent Pathfinding (MAPF), resulting in inefficiencies as agents may be assigned suboptimal tasks that are difficult to coordinate.
To address this, several studies have integrated task assignment with path planning.
Nguyen et al. \cite{nguyenGeneralizedTargetAssignment2017} propose a generalized Combined Target Assignment and Path Finding (TAPF) model using Answer Set Programming, which handles scenarios with unequal numbers of agents and tasks, and incorporates complex requirements such as deadlines.

Honig et al. \cite{honigConflictbasedSearchOptimal2018} introduce Conflict-Based Search with Optimal Task Assignment (CBS-TA), integrating task assignment directly into pathfinding to optimize both simultaneously.
However, these integrated approaches often struggle with scalability in large or complex environments.
Okumura et al. \cite{okumuraSolvingSimultaneousTarget2022} address the unlabeled variant of MAPF, where any agent may be sent to any target.
Their algorithm TSWAP swaps targets between agents and uses a cached distance table to keep the planning cost of each time step low even at large agent counts.
All targets are given at the start rather than arriving over time.
While MAPD addresses dynamic environments with real-time task arrivals, TAPF is typically designed for static settings.

Ma et al. \cite{maLifelongMultiagentPath2017} extend the Lifelong MAPF framework to Multi-Agent Pickup and Delivery (MAPD), proposing algorithms like Token Passing (TP) and Token Passing with Task Swaps (TPTS) to improve flexibility and efficiency.
Building on this, Liu et al. \cite{liuTaskPathPlanning2019} develop MAPD strategies by framing task allocation as a Traveling Salesman Problem for each agent, while integrating deadlock avoidance mechanisms.

In a recent work, Xu et al. \cite{xuMultigoalMultiagentPickup2022} introduce algorithms for the Multi-Goal Multi-Agent Pickup and Delivery (MG-MAPD) problem, extending MAPD to handle tasks with multiple goal locations.
While their algorithms, LNS-PBS and LNS-wPBS, show significant scalability and efficiency improvements, particularly in large-scale settings, they focus primarily on offline environments and assume that tasks are known in advance.
In contrast, our work deals with real-time task arrivals and dynamic human-robot collaboration, which introduces additional complexities and requires more adaptive solutions.

Despite these advancements, challenges remain in managing dynamically increasing tasks and ensuring scalability in complex environments.
For instance, Chen et al. \cite{chenIntegratedTaskAssignment2021} present a unified approach combining task assignment and MAPF but may encounter efficiency issues in larger-scale settings.
Additionally, Krnjaic et al. \cite{krnjaicScalableMultiAgentReinforcement2022} apply Multi-Agent Reinforcement Learning to similar problems, though their methods do not address potential collisions, limiting their applicability in real-world scenarios.
Lodigiani et al. \cite{lodigianiRobustMultiagentPickup2023} study a related problem, MAPD with delays (MAPD-d), where agents may experience unexpected delays during execution. They propose $k$-TP and $p$-TP, two Token Passing variants offering deterministic and probabilistic robustness guarantees respectively, where $k$ bounds the delay an agent may absorb without triggering a replan. Their approach trades efficiency for robustness but does not support real-time replanning upon state changes.

\section{Online MAPD with Human Collaboration}

\subsection{Problem Definition}

We consider the environment as an undirected graph $G=(V,E)$.
At each discrete time step $t\in\{0,1,\dots ,T\}$, an agent $a_i\in A$ located at vertex $v\in V$ may either remain at $v$ or traverse an incident edge $(v,u)\in E$ to reach an adjacent vertex $u$.
One time step corresponds to a wall-clock duration of $\eta$ seconds in real time.

We are given a set of tasks $\mathcal{T}=\{\tau_1,\dots ,\tau_n\}$.
Task $\tau_j=(s_j,r_j)$, where $r_j$ is the release time, becomes available once the current time satisfies $t>r_j$.
Completion requires (i) an agent to travel to the pickup vertex $s_j\in V_P\subseteq V$, collect the package with human assistance, and then (ii) bring the package to some delivery vertex in $V_D\subseteq V$.
All tasks share a common set of packing stations $V_D$; upon pickup, an agent delivers to any available station, typically the nearest one.
Each agent can carry up to $\mathcal{K}$ packages simultaneously. We denote by $c_i$ the number of packages currently held by agent $a_i$.

Let $\mathcal{P}$ and $\mathcal{Q}$ be the disjoint sets of human pickers and packers, respectively, with $\mathcal{P}\cap\mathcal{Q}=\varnothing$, $|\mathcal{P}|>0$, and $|\mathcal{Q}|>0$.
Pickers travel to vertices in $V_P$ to load packages onto agents, whereas packers wait at vertices in $V_D$ to unload arriving packages.
An interaction begins when an agent arrives at a pickup or delivery vertex. The agent becomes stationary immediately and waits until the human worker arrives and completes the operation.
During an interaction the agent must remain stationary.
Let $\sigma(a_i,t)\in\{0,1\}$ be a binary indicator that is $1$ iff agent $a_i$ is interacting with a human at time $t$.
If an interaction begins at time step $t_s$ (i.e. $\sigma(a_i,t_s)=1$), it ends at $t_e=\min\{t>t_s\mid \sigma(a_i,t)=0\}$.
The interaction duration is $\delta=t_e-t_s$, which varies from one interaction to the next, with expected value $\mathbb{E}[\delta]$.

A path for agent $a_i$ is a sequence of vertices $p_i=(v_1,v_2,\dots ,v_k)$ specifying its location at every time step.
A collision at time $t$ occurs either when two agents occupy the same vertex (vertex collision $\langle a_i,a_j,v,t\rangle$) or traverse the same edge in opposite directions (edge collision $\langle a_i,a_j,e,t\rangle$).
Only agent-to-agent collisions are taken into account; human workers are not regarded as blocking.
A solution is the set of paths $P=\{p_1,\dots ,p_m\}$ for all agents and is \emph{feasible} when no collisions occur.
If, at a given time step, the planner fails to find a feasible continuation, all agents remain stationary for that step, whereas human workers proceed with their tasks.

Let $\phi(\tau,a_i,t)\in\{0,1,2,3\}$ denote the status of task $\tau$ with respect to agent $a_i$ at time $t$, where $0$ means unassigned, $1$ assigned, $2$ picked up, and $3$ delivered.
At time $t$, the tasks eligible for assignment form the set
$\mathcal{T}_t=\{\tau\in\mathcal{T}\mid t>r\text{ and }\forall a_i\in A:\phi(\tau,a_i,t)=0\}$.
Each task can be assigned to at most one agent at any given moment.
Conversely, each agent can hold at most $\mathcal{K}$ tasks at a time.

An instance of online MAPDwHC problem is the tuple
\[
\langle G, A, \mathcal{T}, \mathcal{P}, \mathcal{Q}, \mathcal{K}, \sigma(a_i,t), \eta, T\rangle.
\]
Its objective is to maximize the number of tasks completed within the horizon $T$, quantified as
\[
\widetilde{\mathcal{T}}
  =\Bigl|\{\tau_j\in\mathcal{T}\mid
          \exists a_i\in A,\,\exists t\le T:\phi(\tau_j,a_i,t)=3\}\Bigr|.
\]

\subsection{Human Interaction and Collaboration}

We provide an illustration of the MAPDwHC problem in Figure~\ref{fig:problem}.
The main difference between the MAPDwHC problem and the classical MAPD problem is that agents in MAPDwHC rely on human workers to complete pickup and delivery operations.
Consequently, while a planner computes paths, the environment keeps evolving: human workers continue to move and may arrive at pickup or delivery vertices, forcing the agent to wait at that vertex.

Since equipping every worker with positioning hardware adds significant cost, most warehouses do not track human positions in real time. Accordingly, the algorithm observes human--agent interactions only through the binary signal $\sigma(a_i,t)$.
Interaction durations $\delta$ are stochastic and depend on human-worker efficiency, so their end times are unknown in advance.
Whenever an interaction is active, the agent cannot move and thus becomes a \textbf{dynamic obstacle}.

\begin{property}[Dynamic Obstacle]\label{property:robot_movement}
Consider an agent $a_i$ occupying vertex $v_t^{a_i}$ at time $t$, its position at the next time step is
\[
v_{t+1}^{a_i} =
\begin{cases}
v_t^{a_i}, & \text{if } \sigma(a_i,t) = 1,\\
v,          & \text{otherwise},
\end{cases}
\]
where $v\in V$ is its planned next vertex whenever no interaction blocks movement.
\end{property}

To keep other agents collision-free, the planner can temporarily remove any interacting agent $a_i$ and its occupied vertex from consideration, yielding the reduced agent set $A'=A\setminus\{a_i\}$ and graph $G'=(V\setminus\{v_t^{a_i}\},E')$.

Another key aspect of MAPDwHC is that the interaction time $\delta$ is not fixed.
Because human workers do not follow fixed schedules, the start and duration of each interaction are stochastic; consequently, whenever an agent starts or finishes interacting with a human worker, or whenever the available task set $\mathcal{T}_t$ changes at time step $t$, the solution must be updated, requiring a replan for all agents.
If the algorithm fails to produce a feasible solution within the $\eta$-second window before the next time step, human workers continue their activities, but the agents cannot move; we say the algorithm \emph{skips} that time step.
Hence the planner must be \textbf{real-time responsive} to changes in the environment.

\begin{property}[Real-Time Responsiveness]\label{property:real_time_responsiveness}
The algorithm has a time budget of $\eta$ seconds at every time step to compute a feasible solution.
Failing to do so leaves all agents stationary during the next step, i.e., $v_{t+1}^{a_i}=v_t^{a_i}$.
\end{property}

Time steps discretize wall-clock time into intervals of $\eta$ seconds at which the system synchronizes plans to the agents, so $\eta$ is the minimum response time of the system to external changes.
An early solution is applied at the next step, and a late one is skipped while the planner keeps working on the same instance until a change in the environment triggers a replan.
While one could attempt to identify a safe subset of agents that can still move when the planner times out, this requires solving additional MAPF instances over exponentially many agent subsets, since the solver provides no information about which agents caused the failure.
The all-stop policy therefore offers a simple, safe fallback with no additional computation.

\subsection{Well-formed MAPDwHC Instance}

We first define the solvability of an online MAPDwHC instance as follows.

\begin{definition}[Solvable MAPDwHC Instance]\label{def:solvable_MAPDwHC}
An instance of online MAPDwHC is \emph{solvable} if there exists a complete algorithm that, at every time step~$t$, yields a feasible solution~$P$ satisfying Property~\ref{property:robot_movement}.
As the time budget $\eta\!\to\!\infty$ and the time step horizon $T\!\to\!\infty$, the algorithm eventually completes all tasks in the task set $\mathcal{T}$.
\end{definition}

We now outline several key properties to ensure the solvability of a MAPDwHC instance, and then define a well-formed MAPDwHC instance and prove it solvable.

\begin{property}[Map Connectivity]\label{property:map_connectivity}
    Let $\omega(G,v,u)$ denote the (finite) set of paths from vertex $v\in V$ to vertex $u\in V_P\cup V_D$.
    The map is \emph{connected} if
    \[
        \forall\,v\in V,\;\forall\,u\in V_P\cup V_D:\quad
        \bigl|\omega(G,v,u)\bigr|\;\ge\;1.
    \]
\end{property}

\begin{property}[Finite Interaction]\label{property:finite_interaction}
For any human-robot interaction, the interaction must be completed within a finite time frame, which is denoted as $\delta < \infty$.
\end{property}

\begin{property}[Sufficient Junctions]\label{property:map_layout}
For the graph $G=(V,E)$, define the set of junction vertices as
\[
  V_F \;=\; \bigl\{\, v \in V \mid v \notin V_P \cup V_D,\; \deg(v)\ge 3 \bigr\}.
\]
The graph must satisfy $|V_F| \;\ge\; |A|.$
\end{property}

\begin{definition}[Well-formed MAPDwHC Instance]
A well-formed MAPDwHC instance is an online MAPDwHC instance that satisfies Properties \ref{property:robot_movement}, \ref{property:map_connectivity}, \ref{property:finite_interaction} and \ref{property:map_layout}.
\end{definition}

Note that Property~\ref{property:real_time_responsiveness} (Real-Time Responsiveness) is excluded from the well-formedness definition because it does not affect solvability: even if the algorithm exceeds the time budget~$\eta$, keeping all agents at their current positions is always a feasible solution for that time step. Well-formedness concerns structural solvability, not real-time performance.

To transform the current task allocation into a solvable MAPF instance, the allocation algorithm must satisfy the following assumption:

\begin{assumption}[Unique Assignment]\label{assump:unique-task}
    At every time step, no two agents are assigned the same vertex as their goal.
\end{assumption}

Note that multiple tasks in $\mathcal{T}$ may share the same pickup vertex, but at most one of them can be assigned at any given time step.

\begin{theorem}\label{th:1}
    With the task assignment that satisfies Assumption~\ref{assump:unique-task}, any well-formed MAPDwHC instance is solvable by a complete MAPF algorithm.
\end{theorem}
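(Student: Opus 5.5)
The plan is to reduce solvability of the online instance to solvability of a finite sequence of static MAPF instances, one per replanning event. Each such instance can then be solved by a complete MAPF algorithm, and we add a progress argument showing that every task is eventually delivered. Throughout we take $\eta\to\infty$ and $T\to\infty$ as in Definition~\ref{def:solvable_MAPDwHC}, so real-time considerations play no role.

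\textbf{Step 1: per-time-step feasibility.} At any time $t$, let $A_t^{\sigma}=\{a_i\mid\sigma(a_i,t)=1\}$ be the interacting agents. By Property~\ref{property:robot_movement} they are fixed, so we pass to the reduced instance $A'=A\setminus A_t^{\sigma}$ on $G'$, with their vertices deleted. Keeping every agent stationary is always collision-free, so a feasible $P$ satisfying Property~\ref{property:robot_movement} exists at every step. The substance of the theorem is therefore progress rather than feasibility.

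\textbf{Step 2: build a solvable MAPF instance from the assignment.} Each free agent gets as goal its assigned pickup or delivery vertex. By Assumption~\ref{assump:unique-task} these goals are pairwise distinct. An idle agent, or one whose goal is currently occupied by an interacting agent, is instead given a goal in $V_F$. Since $|V_F|\ge|A|$ (Property~\ref{property:map_layout}) and $V_F\cap(V_P\cup V_D)=\varnothing$, such parking goals can be chosen distinct from each other and from all task goals.

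We then argue the resulting MAPF instance on $G'$ is solvable. Property~\ref{property:map_connectivity} gives reachability of every task vertex. The junctions of degree $\ge 3$ supply the ``spare'' branching needed for agents to pass one another, in the spirit of well-formed MAPD instances where non-task endpoints let agents rest without blocking. A complete MAPF algorithm then returns a solution, or at least a plan that brings some agent to its goal.

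\textbf{Step 3: progress.} By Property~\ref{property:finite_interaction}, every interacting agent is released in finite time, so $G'$ recovers the removed vertices after finitely many steps. We use a potential argument: between environment changes the MAPF plan makes some agent reach its goal in finitely many steps. That event either starts an interaction, which is finite and advances a task's status $\phi$ monotonically toward $3$, or parks an agent. Since $\mathcal{T}$ is finite and each task's status can advance at most three times, after finitely many events all released tasks are delivered.

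\textbf{Main obstacle.} The delicate point is Step 2: showing the static instance on the reduced graph $G'$ is genuinely solvable despite dynamic obstacles that may cut narrow aisles. If solvability on $G'$ fails, the first fallback is to wait, since all interactions end in finite time. While agents wait, the instance can be solved on the full $G$, where the degree-$\ge 3$ junctions let a complete solver reorder agents. This is where Properties~\ref{property:finite_interaction} and~\ref{property:map_layout} must be combined carefully.
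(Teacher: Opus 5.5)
Your proposal follows the paper's proof. Assumption~\ref{assump:unique-task} turns the current assignment into a valid MAPF instance, Properties~\ref{property:map_connectivity} and~\ref{property:map_layout} are invoked for solvability when no agent is interacting, and Property~\ref{property:finite_interaction} ensures any blockage on the reduced graph $G'$ clears so that waiting restores solvability; your parking goals in $V_F$ and the potential argument over the statuses $\phi$ only make explicit what the paper's appendix summarizes as ``every task is completed within finite time.'' Like the paper, you assert rather than derive that connectivity plus $|V_F|\ge|A|$ lets a complete solver succeed on the interaction-free instance, so your argument is exactly as strong as the original and shares its one unproven step.
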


\begin{proof}
    At each time step $t$, since the number of available agents is $|A|$, we select a subset $\mathcal{T}'\subseteq\mathcal{T}_t$ of at most $|A|$ tasks, each assigned to a distinct agent satisfying Assumption~\ref{assump:unique-task}. When $|\mathcal{T}_t|>|A|$, the remaining tasks wait for subsequent time steps.
    This assignment yields a valid MAPF instance.
    If all agents are free to move (i.e. $\sum_{a_i \in A} \sigma(a_i,t)=0$), Properties~\ref{property:map_connectivity} and \ref{property:map_layout} guarantee that any complete MAPF algorithm can solve this instance.
    When some agents are blocked (i.e. $\sum_{a_i \in A} \sigma(a_i,t)>0$), as described in Property~\ref{property:robot_movement}, we can create the reduced agent set $A'=A\setminus\{a_i\}$ and graph $G'=(V\setminus\{v_t^{a_i}\},E')$.
    If this removal break Properties~\ref{property:map_connectivity} or \ref{property:map_layout} and create a deadlock, Property~\ref{property:finite_interaction} ensures that Properties~\ref{property:map_connectivity} or \ref{property:map_layout} will be restored after finite time steps.  Once the deadlock is resolved and all agents are again free to move, the MAPF instance becomes solvable.
    Hence, every well-formed MAPDwHC instance is solvable.
    A detailed proof is provided in Appendix~\ref{app:extended_proof}.
\end{proof}

\subsection{Relation to MAPD}

\begin{theorem}
    \label{th:mapd}
    A well-formed MAPDwHC instance can be converted into a sequence of MAPD instances over time.
\end{theorem}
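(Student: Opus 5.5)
The plan is to slice the online MAPDwHC execution at the time steps where the environment changes, and to show that between two consecutive such steps the system is exactly a MAPD instance on a reduced graph. Let $t_0=0<t_1<t_2<\dots$ be the \emph{event times}. These are the steps at which some indicator $\sigma(a_i,t)$ switches value, or the available task set $\mathcal{T}_t$ changes because a task is released, assigned, picked up, or delivered. As noted after Property~\ref{property:real_time_responsiveness}, these are precisely the times that force a replan. By Property~\ref{property:finite_interaction} every interaction ends after finitely many steps. Each task contributes finitely many status changes, and $T$ is finite. So on any horizon the event sequence is finite, and on each interval $[t_k,t_{k+1})$ both $\sigma(\cdot,t)$ and the released task set are constant.

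For each interval I will construct a MAPD instance $M_k$ as follows.
\begin{itemize}
\item Let $B_k=\{a_i\mid \sigma(a_i,t_k)=1\}$ be the interacting agents.
\item Using the reduction stated after Property~\ref{property:robot_movement}, remove these agents and their vertices. This gives the agent set $A_k=A\setminus B_k$ and the graph $G_k=(V\setminus\{v^{a_i}_{t_k}\mid a_i\in B_k\},E_k)$, where $E_k$ is $E$ restricted to the remaining vertices.
\item The task set of $M_k$ is $\mathcal{T}_{t_k}$ together with the tasks already held by agents in $A_k$. A picked-up package is modeled as a task whose pickup is already done, so only its delivery leg to $V_D$ remains.
\item Capacity $\mathcal{K}$ is kept as the MAPD multi-item extension, or, if one insists on standard MAPD, each carried package is treated as a sequential delivery goal.
\item The agents' initial positions in $M_k$ are their positions at $t_k$.
\end{itemize}
Pickup and delivery become instantaneous goal arrivals inside $M_k$. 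Human waiting is not part of $M_k$. An arrival simply triggers the next event $t_{k+1}$, at which that agent joins $B_{k+1}$.

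Next I check consistency of the pieces.
\begin{enumerate}
\item Any collision-free plan of $M_k$, executed on $[t_k,t_{k+1})$, is feasible for MAPDwHC. The blocked agents stay put as required by Property~\ref{property:robot_movement}, and no other agent enters their vertices because those vertices are absent from $G_k$.
\item Conversely, any MAPDwHC trajectory restricted to this interval is a partial MAPD plan of $M_k$.
\item The end state of $M_k$, obtained by truncating its plan at $t_{k+1}$ and updating statuses $\phi$, is exactly the initial state of $M_{k+1}$.
\item Hence concatenating the plans reproduces the MAPDwHC execution, and the count $\widetilde{\mathcal{T}}$ is the sum of the deliveries completed within the $M_k$.
\end{enumerate}
Finally, Assumption~\ref{assump:unique-task} ensures that each $M_k$ has distinct goals.

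The main obstacle is that $M_k$ need not be a \emph{well-formed} MAPD instance. Removing the vertices of $B_k$ can disconnect $G_k$ or reduce the number of endpoint-like vertices. The theorem only claims conversion, not solvability of each $M_k$, so I will state the conversion for arbitrary $M_k$. I will then remark, citing the argument of Theorem~\ref{th:1} and Property~\ref{property:finite_interaction}, that any degenerate $M_k$ lasts only finitely many steps before an event restores the structure of Properties~\ref{property:map_connectivity} and~\ref{property:map_layout}. A smaller subtlety is that events depend on the execution itself, since arrivals trigger interactions. The sequence is therefore defined adaptively rather than in advance. I will phrase the result as an inductive construction where $M_{k+1}$ is determined by the executed prefix of $M_k$.
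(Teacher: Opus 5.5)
Your proposal is correct and uses the same reduction as the paper's proof. When no agent is interacting, the current state is directly a MAPD instance on $(G,A,\mathcal{T}_t)$; otherwise each interacting agent is deleted together with its occupied vertex, giving a MAPD instance on $A'=A\setminus\{a_i\}$ and $G'=(V\setminus\{v_t^{a_i}\},E')$.

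The paper simply makes this snapshot construction at every time step and stops there. Your event-interval segmentation, your treatment of already-held tasks and of the capacity $\mathcal{K}$, and your check that truncated plans concatenate into the MAPDwHC execution are extra rigor on the same idea. One small fix: since tasks may share a pickup vertex, drop from $M_k$ any pending task whose pickup vertex was deleted along with a blocked agent, so that $M_k$ is well defined.
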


\begin{proof}
    If there is no interaction at time step $t$, it reduces directly to a valid MAPD instance with the task set $\mathcal{T}_t$, the graph $G$ and the agents $A$.
    During an interaction, if an agent $a_i$ becomes stationary at vertex $v_t^{a_i}$, we obtain a MAPD instance with the reduced agent set $A' = A \setminus \{a_i\}$ and the reduced graph $G' = (V \setminus \{v_t^{a_i}\}, E')$. The instance remains solvable as long as Properties~\ref{property:map_connectivity} and \ref{property:map_layout} hold.
\end{proof}

We leverage Theorem~\ref{th:mapd} to adapt MAPD algorithms for evaluation within MAPDwHC scenarios and enable comparisons with the proposed algorithms.
However, during human-robot interactions, preserving previous task assignments may render MAPD instances unsolvable on the reduced graph $G'$.
For example, removing an occupied vertex may split $G'$ into disconnected components, causing tasks that span multiple components to become infeasible to complete.
Proving unreachability often requires more computation than finding a feasible plan; for instance, A*-based algorithms \cite{Hart1968Astar} may need to explore the entire search space before concluding that no path exists.
MAPD algorithms typically do not account for this.
Our approach explicitly incorporates it and delivers an improved solution.

\section{Method}

\subsection{Greedy Assignment}

Since both MAPD and MAPF are NP-hard \cite{maMultiAgentPathFinding2016}, finding optimal solutions within reasonable time is challenging.
We simplify the problem by separating it into two levels: task assignment and pathfinding.
At the high level, tasks are assigned to agents, and at the low level, existing MAPF algorithms solve the pathfinding problem.

A straightforward yet suboptimal approach to solve the task assignment in MAPDwHC is the First-Come, First-Served (FCFS) strategy, where tasks are processed strictly in arrival order.
While simple to implement, FCFS often causes unnecessary back-and-forth travel.
A more effective alternative is the Greedy Assignment (GA) algorithm.
Let $a_i.g$ denote the current goal vertex of agent $a_i$, and let $h(\cdot,\cdot)$ be a heuristic function that estimates the travel cost between two vertices (e.g., Manhattan distance).
For each unassigned agent $a_i$, if its current load $c_i < \mathcal{K}$, GA selects the nearest available task as its goal, i.e., $a_i.g \gets \min_{s_i \in \mathcal{T}_t} h(a_i.v, s_i)$, where $a_i.v$ is the agent's current vertex; if $c_i = \mathcal{K}$ (capacity full), the agent is directed to the nearest delivery point, i.e., $a_i.g \gets \min_{d_i \in V_D} h(a_i.v, d_i)$.
To respect Property~\ref{property:robot_movement}, any agent currently interacting with a human has its position removed from the graph.
The resulting set of start-goal pairs is then passed to a complete MAPF algorithm to compute collision-free paths.
Note that FCFS is a special case of GA where the heuristic $h$ assigns tasks in arrival order.

\begin{theorem}\label{th:ga}
    A well-formed MAPDwHC instance can be solved by the greedy assignment algorithm in combination with any complete MAPF algorithm, together with a heuristic function $h$ that ensures no agents have overlapping goals.
\end{theorem}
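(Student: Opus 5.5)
The plan is to reduce Theorem~\ref{th:ga} to Theorem~\ref{th:1}. Theorem~\ref{th:1} already says that any task assignment satisfying Assumption~\ref{assump:unique-task}, combined with a complete MAPF algorithm, solves a well-formed instance. So it suffices to show three things: greedy assignment (GA) with a suitable heuristic $h$ satisfies Assumption~\ref{assump:unique-task} at every time step; GA always produces a well-defined goal for every agent; and the resulting sequence of MAPF instances makes progress until every task in $\mathcal{T}$ is delivered.

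First, I would fix the hypothesis on $h$ precisely. GA processes agents one at a time. When agent $a_i$ is handled, the $\arg\min$ over $s\in\mathcal{T}_t$ (or over $d\in V_D$ when $c_i=\mathcal{K}$) is taken only over vertices not already chosen as goals by earlier agents in the same step. Ties are broken by a fixed order. This is exactly ``a heuristic function $h$ that ensures no agents have overlapping goals''; I would model it as $h$ returning $+\infty$ on already-claimed vertices. Under this choice, Assumption~\ref{assump:unique-task} holds by construction.

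Second, I would check that a goal always exists. This is the step I expect to be the main obstacle. If every free pickup vertex and every station is already claimed, an agent has no admissible goal. I would handle this by letting such an agent take its current vertex as its goal; if that vertex collides with another goal, it takes a distinct junction vertex in $V_F$ instead, which exists since $|V_F|\ge|A|$ by Property~\ref{property:map_layout}. Agents interacting with humans are removed together with their vertex, as in Property~\ref{property:robot_movement}. Each step then yields a MAPF instance with distinct goals, on $G$ or on the reduced graph $G'$. The argument of Theorem~\ref{th:1}, using Properties~\ref{property:map_connectivity}, \ref{property:map_layout} and \ref{property:finite_interaction}, shows that the instance is solvable either immediately or after finitely many steps once the blocking interactions end.

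Finally, I would argue that all tasks complete as $\eta,T\to\infty$. Every agent whose load is positive but not yet full still has a reachable goal, since its pickup or delivery vertices remain reachable by Property~\ref{property:map_connectivity}. Each interaction ends in finite time by Property~\ref{property:finite_interaction}. Hence every assigned task is picked up and delivered in finite time, which frees agent capacity. Because $\mathcal{T}$ is finite and GA never leaves a task unassigned while some agent has $c_i<\mathcal{K}$ and an unclaimed pickup vertex, the number of remaining tasks strictly decreases over bounded-length intervals. By Definition~\ref{def:solvable_MAPDwHC}, the instance is therefore solved. The delicate point is ruling out starvation, where a task waits forever because its pickup vertex is repeatedly claimed. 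I would address this by noting that each claim is followed by a completion in finite time, so the pickup vertex is eventually freed, and the number of tasks sharing that vertex is finite.
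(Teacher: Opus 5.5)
Your reduction is the paper's: its proof only says $h$ can be built from Manhattan distance plus a no-overlap rule, so Assumption~\ref{assump:unique-task} holds, connectivity makes the MAPF instance solvable, and Theorem~\ref{th:1} covers the steps where interactions temporarily break connectivity. The gap is in the liveness paragraph you add on top of that.

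You claim that every agent with $0<c_i<\mathcal{K}$ ``still has a reachable goal'', and hence that every picked-up task is delivered in finite time. But GA sends an agent to a station only when $c_i=\mathcal{K}$. While $c_i<\mathcal{K}$, its goal comes from $\mathcal{T}_t$ alone. Since $\mathcal{T}$ is finite, $\mathcal{T}_t$ is eventually empty for good. From then on a partially loaded agent gets no task goal, and under your fallback it parks at its current vertex forever. Reachability of stations under Property~\ref{property:map_connectivity} does not help, because no station is ever selected. For example, with $|\mathcal{T}|=1$ and $\mathcal{K}=2$, the single task is picked up and never delivered. Then $\widetilde{\mathcal{T}}=0$ for every $T$, and Definition~\ref{def:solvable_MAPDwHC} fails. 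Your counting argument (remaining tasks strictly decrease, no starvation) therefore breaks exactly at the end of the run.

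To repair it, change your fallback. An agent with $c_i>0$ that has no unclaimed pickup goal should be sent to an unclaimed station in $V_D$, using a junction in $V_F$ only while all stations are claimed. Goals persist until reached, so such an agent delivers in finite time, and your finite-count argument then goes through. This modifies GA as literally specified, or reads ``$h$'' permissively. The paper's two-line proof silently relies on the same point by deferring to Theorem~\ref{th:1}.

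Two smaller corrections:
\begin{itemize}
\item For Assumption~\ref{assump:unique-task} to hold by construction, the claimed set must contain every goal currently held, including goals persisting from earlier steps. Goals chosen earlier in the same step are not enough, because two tasks may share a pickup vertex and a newly full agent may pick a station another agent is already heading to.
\item Property~\ref{property:finite_interaction} gives finite but not uniformly bounded interaction times. The remaining-task count therefore decreases over finite, not bounded, intervals.
\end{itemize}
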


\begin{proof}
    We can design a heuristic function $h$ that selects a goal based primarily on the Manhattan distance, while ensuring that no agents have overlapping goals.
    Given Property~\ref{property:map_connectivity}, the selected goals form a MAPF instance that can be solved by any complete MAPF algorithm.
    Although Property~\ref{property:map_connectivity} may not always be satisfied when some agents are interacting with humans, Theorem~\ref{th:1} proves that the MAPF instance will become solvable within a finite number of time steps.
    By applying Theorem~\ref{th:1}, we can conclude that Theorem~\ref{th:ga} also holds.
\end{proof}

Complete MAPF algorithms always return either a valid plan or a proof that no solution exists.
On unsolvable instances, however, demonstrating unsolvability still forces the search to (eventually) explore the entire search space, so the runtime can grow prohibitively large—even when advanced pruning or ordering heuristics are used.

\subsection{Solution Refinement}

We propose \emph{CARVS} (\textbf{C}ongestion-\textbf{A}ware \textbf{R}andom-initialization, \textbf{V}alidation, and \textbf{S}wapping), an approach that enhances a baseline greedy assignment with three key improvements, as outlined in Algorithm \ref{alg:greedy_refine}.
CARVS is designed to optimize task allocation while keeping the underlying logic lightweight, thereby ensuring efficient performance in real-world deployments.
The key refinement steps are highlighted in blue.
The refinement algorithm is composed of two main parts, with the three improvements illustrated in Figure~\ref{fig:methods}.
Together, these enhancements make CARVS more robust than purely greedy methods in large-scale warehouses.

CARVS adopts a complete MAPF solver as its back-end.
Leveraging this completeness, CARVS is guaranteed to solve any well-formed MAPDwHC instance (a formal proof is given later).
The method's three enhancements are specifically crafted to prevent the excessive runtimes that complete MAPF solvers can incur when temporary blockages arise, enabling CARVS to pair seamlessly with any future, more advanced complete MAPF algorithm.
Although the solvers used in this paper are centralized, any centralized or distributed MAPF solver that retains the completeness property could serve equally well as CARVS's underlying planner.

\begin{algorithm}[htbp]
\footnotesize
\caption{CARVS Algorithm}
\label{alg:greedy_refine}
\begin{algorithmic}[1]
\REQUIRE Instance $\langle G,A,\mathcal{T},\mathcal{P},\mathcal{Q},\mathcal{K},\sigma(a_i,t),\eta, T \rangle$, PF algorithm $\omega$, MAPF algorithm $\Omega$, Heuristic function $h$
\ENSURE Solution $P$
\STATE \textit{// Greedy assignment}
\FOR{each $a_i \in A$}
\label{algo:greedy_assign_start}
    \IF{$\exists a_i.g$}
        \STATE \textbf{continue}
    \ENDIF
    \IF{$c_i < \mathcal{K}$}
        \IF{$c_i = 0$}
        \label{algo:random_first_start}
            \STATE $a_i.g \gets \text{rand}_{s_i \in \mathcal{T}_t}(s_i)$ \COMMENT{random initialization}
            \label{algo:random_first}
        \ELSE
        \label{algo:random_first_end}
            \STATE $a_i.g \gets \min_{s_i \in \mathcal{T}_t} h(a_i.v, s_i)$
        \ENDIF
    \ENDIF
    \IF{$c_i = \mathcal{K}$}
        \STATE $a_i.g \gets \min_{d_i \in V_D} h(a_i.v, d_i)$
    \ENDIF
\ENDFOR
\label{algo:greedy_assign_end}
\STATE Initialize set $M$; Duplicate $G$ as $G'$
\FOR{each $a_i \in A$}
    \IF{$\nexists a_i.g$}
        \STATE \textbf{continue}
    \ENDIF
    \IF{$\sigma(a_i,t)=1$}
        \STATE remove $a_i.v$ from $G'$
    \ENDIF
\ENDFOR
\STATE \textit{// Swap goals}
\FOR{each $a_i, a_j \in A$ where $i \neq j$}
\label{algo:swap_goal_start}
    \IF{$\phi(\tau_i,a_i,t) \neq 1$ \textbf{or} $\phi(\tau_j,a_j,t) \neq 1$}
        \STATE \textbf{continue}
    \ENDIF
    \STATE $p_{ii} \gets \omega(G', a_i.v, a_i.g)$; $p_{jj} \gets \omega(G', a_j.v, a_j.g)$
    \IF{$a_i.v \notin p_{jj}$ \textbf{or} $a_j.v \notin p_{ii}$}
        \STATE \textbf{continue}
    \ENDIF
    \STATE $p_{ij} \gets \omega(G', a_i.v, a_j.g)$; $p_{ji} \gets \omega(G', a_j.v, a_i.g)$
    \IF{$a_i.v \notin p_{ji}$ \textbf{or} $a_j.v \notin p_{ij}$}
        \STATE swap $a_i.g$ and $a_j.g$ \COMMENT{swap goals}
    \ENDIF
\ENDFOR
\label{algo:swap_goal_end}
\STATE \textit{// Plan path}
\FOR{each $a_i \in A$}
    \IF{$\sigma(a_i,t)=1$}
        \STATE \textbf{continue}
    \ENDIF
    \IF{$\exists a_i.g$ \textbf{and} $\omega(G', a_i.v, a_i.g) \neq \emptyset$}
        \STATE $M.\text{add}(\langle a_i.v, a_i.g \rangle)$
        \STATE \textbf{continue}
    \ENDIF
    \STATE $w_i \gets \min_{w_i \in V_W} h(a_i, w_i)$ \COMMENT{find waiting point}
    \STATE $M.\text{add}(\langle a_i.v, w_i \rangle)$
\ENDFOR
\STATE $P \gets \Omega(G', M)$
\RETURN{$P$}
\end{algorithmic}
\end{algorithm}

\subsubsection{Validate Assignment}
The first part of the improvements focuses on refining the current assignment.
A key idea is task swapping, which is only allowed during pickup tasks before items are collected (i.e. $\phi(\tau_i,a_i,t) = \phi(\tau_j,a_j,t) = 1$).
When two agents' shortest paths intersect at each other's current positions and both are heading to pickup points, the algorithm can swap their tasks.
Since agents can not swap physical items with one another, delivery tasks will not be swapped between them.
Another important idea is to validate whether an agent can reach its goal using a path planning algorithm $\omega$ (e.g. A*).
If the agent cannot reach its goal, a waiting point \( w_i \in V_W \) will replace its real goal.
The set of waiting points $V_W$ is defined as the set of non-cut vertices that are not pickup or delivery locations:
\[
  V_W \;=\; \bigl\{\, v \in V \mid v \notin V_P \cup V_D,\; G \setminus \{v\} \text{ remains connected} \bigr\}.
\]

The heuristic function $h$ in the algorithm can select a waiting point near the agent's pickup or delivery point based on the Manhattan distance, while ensuring that no agents have overlapping goals.
When an agent's goal is temporarily unreachable, redirecting it to a nearby waiting point alleviates congestion in key areas.
Because every $w \in V_W$ is a non-cut vertex of $G$, stationing a single agent at $w$ preserves graph connectivity.
Using multiple waiting points simultaneously can still help alleviate congestion.
However, graph partitions caused by human-robot interactions cannot be resolved by goal reassignment; Theorem~\ref{th:1} guarantees restoration of solvability once such interactions conclude.

\subsubsection{Random Initialization}

Random initialization modifies the initial assignment for a free agent by randomly assigning its first goal from the available tasks $\mathcal{T}_t$.
This adjustment helps alleviate congestion, particularly in areas with uneven distributions of pickup points or high task density.
Without this improvement, if a robot consistently selects the nearest goal after unloading, it may end up handling only tasks near the drop-off zone, especially in regions where new tasks frequently appear.
While the greedy assignment performs well on open maps, it can lead to problems in environments with narrow passages.
In such cases, corridors may be blocked for extended periods, increasing robot density in certain areas and causing congestion.
By selecting the first goal randomly and then choosing the nearest goal for subsequent tasks, CARVS effectively disperses the robots across the map, reducing bottlenecks and ensuring smoother overall operation.

\begin{theorem}\label{th:CARVS}
    A well-formed MAPDwHC instance can be solved by CARVS in combination with any complete MAPF algorithm, with a heuristic function $h$ that ensures no agents have overlapping goals.
\end{theorem}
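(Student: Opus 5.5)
The plan is to reduce Theorem~\ref{th:CARVS} to Theorem~\ref{th:ga} and Theorem~\ref{th:1}. We show that each of the three refinements of Algorithm~\ref{alg:greedy_refine} preserves the two facts those proofs rely on. First, the start--goal set $M$ passed to $\Omega$ satisfies Assumption~\ref{assump:unique-task}. Second, whenever the environment is not blocked by interactions, $M$ forms a MAPF instance that a complete algorithm can solve. With these two invariants in hand, the argument of Theorem~\ref{th:1} applies verbatim and yields solvability in the sense of Definition~\ref{def:solvable_MAPDwHC}.

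\textbf{Step 1 (random initialization).} Lines~\ref{algo:random_first_start}--\ref{algo:random_first_end} replace the nearest-task choice by a random choice from $\mathcal{T}_t$ only when $c_i=0$. This is simply a different heuristic in the greedy scheme. As in Theorem~\ref{th:ga}, we let $h$, and the random draw, exclude pickup vertices already taken as goals. Property~\ref{property:map_connectivity} makes every chosen goal reachable in $G$, so the greedy-assignment argument of Theorem~\ref{th:ga} is unaffected.

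\textbf{Step 2 (swapping).} Lines~\ref{algo:swap_goal_start}--\ref{algo:swap_goal_end} only permute goals between two agents that both have status $1$. The multiset of goals is therefore unchanged, and uniqueness is preserved. Because only pickup tasks are swapped, no carried package is reassigned, so task-status consistency also holds.

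\textbf{Step 3 (validation and waiting points).} An agent whose goal is unreachable in $G'$ gets a waiting point $w_i\in V_W$, chosen by $h$ so that no two agents share a goal. We must check that such distinct waiting points exist. Property~\ref{property:map_layout} gives $|V_F|\ge|A|$; I would argue that $V_W$ is large enough to supply distinct points, or else that $h$ may fall back on junction vertices, which are not pickup or delivery vertices. Since $V_W$ excludes $V_P\cup V_D$, waiting goals never collide with task goals. Hence $M$ still satisfies Assumption~\ref{assump:unique-task}.

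\textbf{Step 4 (progress).} If no agent is interacting, then $G'=G$ and every task goal is reachable. Validation therefore substitutes no waiting points, and the instance is exactly one covered by Theorem~\ref{th:ga}; $\Omega$, being complete, returns a plan. If agents are interacting, Property~\ref{property:finite_interaction} and the argument of Theorem~\ref{th:1} guarantee that the blockage ends after finitely many steps. At that point agents parked at waiting points have reachable real goals again and are assigned $\langle a_i.v,a_i.g\rangle$.

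It remains to exclude infinite postponement, for instance through repeated swaps or repeated re-randomization. The cleanest route is to note two facts. A goal is kept once assigned, because the loop skips any agent with an existing $a_i.g$. Each swap exchanges goals among agents that all eventually move once interactions end. The number of assigned tasks is therefore nondecreasing, and each assigned task is completed within finitely many unblocked steps. Letting $\eta,T\to\infty$ then completes all of $\mathcal{T}$.

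I expect Step~3 to be the main obstacle, namely guaranteeing enough distinct waiting points with reachable, non-conflicting goals. There is a second gap in Step~4: showing that swaps cannot cycle indefinitely. A cycle seems impossible because a swap is triggered only when both paths pass through each other's positions, but this needs a monotone potential argument, such as the sum of path lengths being non-increasing.
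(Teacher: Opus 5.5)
\paragraph{Overall route.} Your route matches the paper's. The paper also reduces to Theorem~\ref{th:1} through the greedy-assignment argument. It notes that swapping only permutes goals, so it cannot invalidate the MAPF instance, and it relies on Property~\ref{property:finite_interaction} to end interaction-induced blockages.

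\paragraph{Step~3 would not go through as planned.} Property~\ref{property:map_layout} bounds $|V_F|$, not $|V_W|$, and a junction can be a cut vertex. Take a tree whose centre has degree $3$ and whose leaves are the pickup and delivery vertices. With one agent this is a well-formed instance with $|V_F|=1\ge|A|$, yet $V_W=\varnothing$. So you cannot derive enough distinct waiting points from well-formedness. Detouring to junctions is also not what the algorithm does, since it draws waiting points only from $V_W$.

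The paper never needs such a count. Distinct goals, waiting points included, are part of the theorem's hypothesis on $h$, and $h$ is what selects $w_i$. When no suitable waiting point exists, the paper simply lets the agent keep its previous goal. That fallback is harmless for solvability. By Property~\ref{property:map_connectivity}, a task goal can be unreachable in $G'$ only because of an ongoing interaction. So at worst the planner fails and all agents stay put, which is a feasible step, until the finite-interaction argument of Theorem~\ref{th:1} takes over.

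Relatedly, your Step~4 claim that no waiting points are used when nobody is interacting overlooks agents with $\nexists a_i.g$. The path-planning loop also sends these agents to points of $V_W$. Their distinctness likewise comes from the hypothesis on $h$, not from the map properties.

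\paragraph{Swap cycling.} This second worry is not an obstacle for what the paper uses. All it needs from swapping is that the set of goals is unchanged, and within a step the swap loop is a single finite pass over pairs. If you want a monotone quantity for the swap itself, one comes for free whenever $\omega$ returns shortest paths in $G'$ (as A* does). Write $d_{G'}$ for shortest-path distance in $G'$. A swap is considered only if $a_j.v$ lies on a shortest $a_i.v$--$a_i.g$ path and $a_i.v$ lies on a shortest $a_j.v$--$a_j.g$ path. Hence $d_{G'}(a_i.v,a_i.g)+d_{G'}(a_j.v,a_j.g)=2d_{G'}(a_i.v,a_j.v)+d_{G'}(a_j.v,a_i.g)+d_{G'}(a_i.v,a_j.g)$. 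The swap therefore lowers the sum of distances by $2d_{G'}(a_i.v,a_j.v)>0$.
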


\begin{proof}
   We can design a heuristic function \( h \) that selects a goal based primarily on the Manhattan distance, ensuring that no agents have overlapping goals.
   Given Property~\ref{property:map_connectivity}, the selected goals form a MAPF instance that can be solved by any complete MAPF algorithm.
   Although Property~\ref{property:map_connectivity} may not always be satisfied when some agents interact with humans, Theorem~\ref{th:1} guarantees that the MAPF instance will become solvable within a finite number of time steps.
   Therefore, a valid MAPF instance can be constructed from the current task assignment.
   Swapping goals does not invalidate the MAPF instance, and if no suitable waiting point is found, retaining the previous assignment is acceptable.
   By following Theorem~\ref{th:1}, we can conclude that Theorem~\ref{th:CARVS} holds.
\end{proof}

\begin{figure*}[!t]
    \centering
    \includegraphics[width=\textwidth]{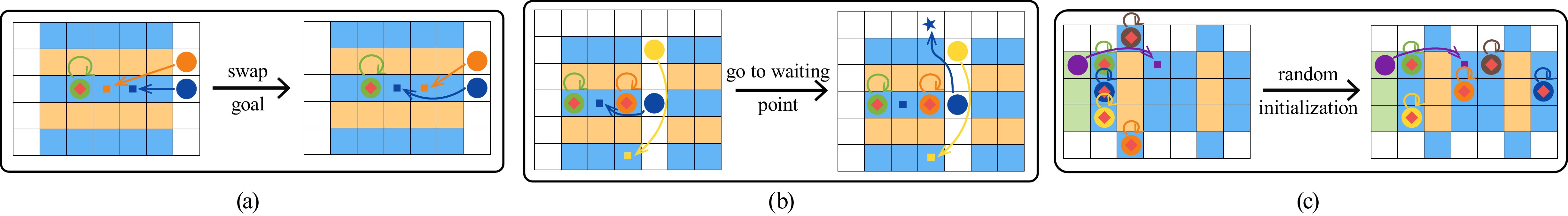}
    \caption{A diagram illustrating optimization methods: the picking area is blue, the packing area green, and shelves are orange. Circles represent agents, diamonds represent humans, and squares represent goal points, with arrows indicating robot targets. (a) Goal swapping reduces delays from intersecting paths. (b) Stars mark wait points for blocked agents. (c) Greedy assignment leads to congestion near the packing area, mitigated by random initial goal assignment for idle agents.}
    \label{fig:methods}
    \Description{A diagram illustrating optimization methods: the picking area is blue, the packing area green, and shelves are orange. Circles represent agents, diamonds represent humans, and squares represent goal points, with arrows indicating robot targets. (a) Goal swapping reduces delays from intersecting paths. (b) Stars mark wait points for blocked agents. (c) Greedy assignment leads to congestion near the packing area, mitigated by random initial goal assignment for idle agents.}
\end{figure*}

\begin{figure}[!t]
  \centering
  \includegraphics[width=\columnwidth]{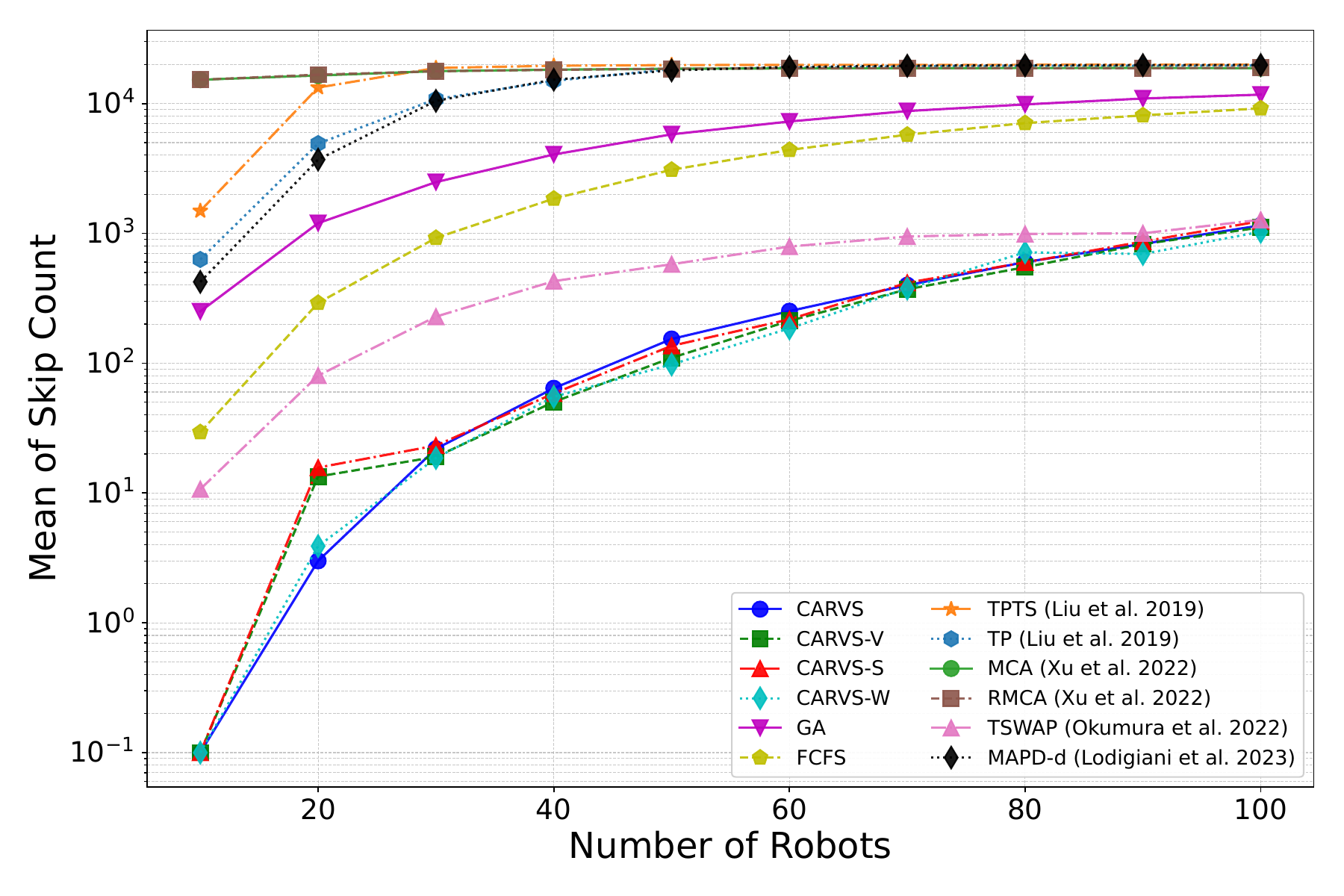}
  \caption{Mean skip count on the \texttt{warehouse\_small} map for 10 to 100 robots. A skip occurs when the planner fails to return a solution within the time limit. CARVS variants substantially reduce skips as the robot count grows.}
  \label{fig:not_found}
  \Description{Mean skip count on the \texttt{warehouse\_small} map for 10 to 100 robots. A skip occurs when the planner fails to return a solution within the time limit. CARVS variants substantially reduce skips as the robot count grows.}
\end{figure}

\section{Experiment}
\label{sec:experiment}

\begin{figure*}[!t]
  \centering
  \includegraphics[width=\textwidth]{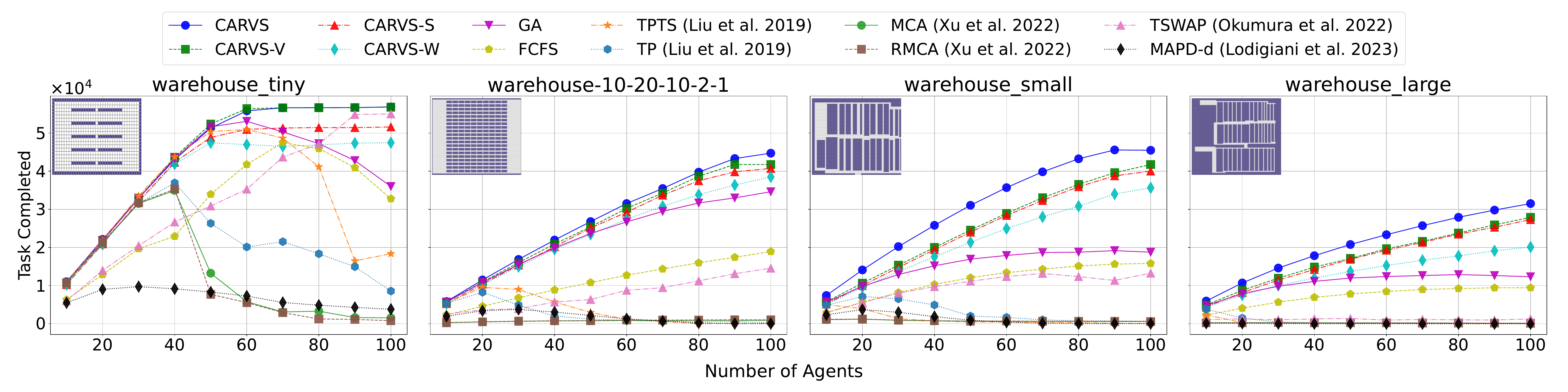}
  \caption{The performance metrics of various algorithms across multiple warehouse configurations, which show the mean number of completed tasks across 10 random seeds as a function of agent count, ranging from 10 to 100. CARVS scales effectively with the number of agents, outperforming other algorithms.}
  \label{fig:WMAPF-metrics}
  \Description{The performance metrics of various algorithms across multiple warehouse configurations, which show the mean number of completed tasks across 10 random seeds as a function of agent count, ranging from 10 to 100. CARVS scales effectively with the number of agents, outperforming other algorithms.}
\end{figure*}

Following Theorem~\ref{th:mapd}, we can compare CARVS against traditional MAPD algorithms.
The experiments aim to demonstrate the effectiveness of CARVS combined with MAPF algorithms in solving the MAPDwHC problem, with a focus on potential real-world applications.
We use a simulation that aligns human activities with real-world scenarios for comparisons.
The $\eta$ is set to 1 second.
Failing to produce a solution within this time limit results in a skipped time step, where all agents remain stationary until the next time step.
Each time step, a robot moves 0.75 meters (one edge long), while humans, moving at 1.5 m/s, are set to twice the robot's speed.
In a real-world deployment the signal $\sigma(a_i,t)$ is triggered directly by the human worker.
In the simulation, human workers are following an A* path and $\sigma(a_i,t)$ is generated from the predicted trajectory.
In the simulation, the interaction duration follows a degenerate distribution with $\Pr(\delta=10)=1$, and hence $\mathbb{E}[\delta]=10$ time steps (10 seconds).
Performance is evaluated based on total tasks completed.
We test several recent MAPF algorithms with CARVS, including EECBS \cite{liEECBSBoundedSuboptimalSearch2021}, RHCR \cite{liLifelongMultiAgentPath2021}, CBSH2-RTC \cite{liPairwiseSymmetryReasoning2021}, LaCAM \cite{okumuraLaCAMSearchBasedAlgorithm2022}, LaCAM2 \cite{okumuraImprovingLaCAMScalable2023}, and MAPF-LNS2 \cite{liMAPFLNS2FastRepairing2022}.
Not all MAPF algorithms perform well on complex maps with many agents, so we select LaCAM2 for its best performance.
A detailed comparison of the MAPF algorithms is provided in Appendix~\ref{app:mapf}.

Most of the MAPF algorithms are implemented in C++, and we have added a wrapper to make them compatible with our simulator and the MAPDwHC problem.
The wrapper generates a map file and a MAPF instance for the algorithm.
After the program completes, the wrapper reads and parses the output solution for the simulation.
For the MAPD algorithms, we include results from TP/TPTS \cite{liuTaskPathPlanning2019}, MCA/RMCA \cite{xuMultigoalMultiagentPickup2022}, TSWAP \cite{okumuraSolvingSimultaneousTarget2022}, and MAPD-d \cite{lodigianiRobustMultiagentPickup2023}. We run MAPD-d as $k$-TP with $k=0$, because our setting provides no reliable way to fix $k$ in advance and a larger $k$ degrades performance sharply. Appendix~\ref{appendix:tpts_adaptation} explains this choice.
These algorithms were adapted for our simulation environment, with detailed modifications provided in Appendix~\ref{appendix:tpts_adaptation}.

The simulator and task assignment algorithms were developed in Python and run on a server with an AMD EPYC 7H12 64-Core Processor and 128GB RAM.
Each configuration was tested with 10 random seeds affecting agent positions, task allocation, and MAPF algorithms, with results averaged across all seeds.

\subsection{CARVS Performance}

For ablation study, we use CARVS-W to denote the CARVS variant that includes only waiting point improvement and CARVS-S to denote the CARVS variant that includes only swapping tasks.
CARVS-V is the CARVS variant that includes the Validate Assignment improvement but omits Random Initialization.
Note that removing all enhancements from CARVS reduces it to the greedy algorithm (GA).
Our evaluation spans four distinct map layouts, as depicted in Figure~\ref{fig:WMAPF-metrics}, with performance measured using total completed tasks $\widetilde{\mathcal{T}}$.
Each algorithm is tested over 20,000 time steps.

Algorithm scalability is a critical factor in performance assessment.
Figure \ref{fig:WMAPF-metrics} illustrates how task completion varies across different map layouts and agent counts.
As map congestion increases, most algorithms exhibit a decline in completed tasks.
However, CARVS demonstrates remarkable robustness, with its performance advantage becoming more pronounced as map difficulty and agent count increase.

By contrast, TP/TPTS and MCA/RMCA degrade sharply even in the \texttt{warehouse\_tiny} benchmark: once the agent count surpasses 60, their task-completion curves plunge, and at 100 agents they process no more than 10\% of their own peak workload.
In the larger warehouse settings their performance collapses further, completing virtually no tasks at all.
While FCFS completes fewer tasks than GA and CARVS, it consistently outperforms TP/TPTS and MCA/RMCA on larger maps.
TP and TPTS evaluate every agent-task pair; when an agent is free, they assign the task and generate a plan using prioritized planning.
Unfortunately, prioritized planning can still fail on some MAPF instances.
MCA and RMCA, in turn, build a separate task heap for each task, calling the path-planning routine in the process.
Because the task set changes continuously, these heaps must be rebuilt at every timestep, which is time-consuming.
That's why both TP/TPTS and MCA/RMCA scale poorly.

The two remaining MAPD baselines also perform poorly on complex maps, although only one is limited by planning cost.
TSWAP rarely exceeds the time limit because each step uses a greedy move guided by cached distances and local conflict rules rather than a joint search.
Its task completion increases on the three smaller layouts but remains nearly constant on the largest one.
TSWAP resolves conflicts by swapping targets rather than rerouting agents.
However, TSWAP is designed for unlabeled targets.
MAPD distinguishes pickup and drop-off targets.
The adapter can only restore the intended assignment in a later planning attempt after a target swap overrides it.
A delivery may therefore require multiple planning attempts even though each attempt is fast.
With 100 robots on \texttt{warehouse\_large}, less than 2\% of successful TSWAP paths for dropping robots ended at the assigned drop-off target.
This explains the low efficiency of drop-off tasks.
MAPD-d fails in the opposite way. Its replanning cost grows with the agent count until it exceeds the time budget at almost every step, and task completion drops to 2 at 100 agents.

Although GA outperforms FCFS, it struggles on complex maps such as \texttt{warehouse\_small} and \texttt{warehouse\_large}.
The non-uniform task distribution leads to agent crowding in certain areas, complicating path planning.
Additionally, long tunnels and dynamic obstacles further reduce GA's efficiency.
Due to the lack of full integration between high-level task assignment and low-level path planning, agents often remain idle when path planning fails, reducing efficiency.
In contrast, CARVS-V and CARVS validate goal reachability beforehand, improving the chances of solving the MAPF subproblem in time.
The validation improvements increase task completion by at least 2x on \texttt{warehouse\_large} with 100 agents.

CARVS further refines CARVS-V by mitigating agent congestion near drop-off zones.
GA and CARVS-V tend to select pickup locations close to drop-off points, leading to bottlenecks.
CARVS circumvents this issue by randomly selecting the first pickup locations, ensuring a more even agent distribution.
This enhancement increases task completion by at least 1.1x on \texttt{warehouse\_large} with 100 agents.

Although the performance gap among CARVS variants appears modest in Figure~\ref{fig:WMAPF-metrics}, this is largely because the y-axis spans the full range needed to display all algorithms, compressing the differences among variants that share a similar performance tier.
More fundamentally, planning timeouts are a major bottleneck for search-based MAPDwHC solvers.
When the MAPF solver exceeds the time budget $\eta$, all agents stall for that step.
Every CARVS variant incorporates mechanisms that reduce such timeouts and avoids the timeout-driven losses seen in several baselines.
Because these mechanisms are shared, the performance differences among the variants remain relatively small.

Finally, Figure~\ref{fig:not_found} illustrates the impact of different algorithms on skips as the number of agents increases on the \texttt{warehouse\_small} map.
A skip occurs when an algorithm exceeds the time limit.
Switching from GA to CARVS significantly reduces timeouts, by about 90.61\% at the highest robot count, demonstrating CARVS's improved efficiency in complex multi-agent scenarios, enhancing scalability and reliability.
Although TSWAP also keeps its skip count low, its target-swapping mechanism is not suitable for MAPD tasks.

Our largest experiments use 100 robots, far fewer than typical MAPF benchmarks, since the planner in MAPDwHC has only $\eta = 1\,\text{s}$ at every step while a MAPF solver usually gets a long limit for one instance, for example ten seconds in LaCAM*~\cite{okumuraImprovingLaCAMScalable2023}.
Under this budget the MAPD baselines already complete almost no tasks on the complex maps at 100 robots, while CARVS still does, so this setting is sufficient to show the difference.

\subsection{Additional Experiment Settings}

Apart from the choice of MAPF algorithms, the performance can vary based on three factors: map layout, task distribution, and the human-robot ratio.
Below, we outline the settings for each factor.

\subsubsection{Map Layout}
The layout of the map significantly impacts algorithm performance.
We scanned two real-world warehouse layouts, converted them into 2D grid maps, and formulated them as MAPF benchmarks \cite{sternMultiAgentPathfindingDefinitions2019}.
These maps were used to evaluate algorithm performance.
We estimate map complexity using five structural metrics: total vertex count, obstacle density, number of tunnels (our term for narrow aisles), average tunnel length, and the proportion of vertices with degree~$\ge 3$.
Table~\ref{table:map-layout} reports these statistics on the maps we used in the experiments.

\begin{table}[!t]
  \centering
  \small
  \setlength{\tabcolsep}{3pt}
  \caption{Metrics on Map Layout}
  \label{table:map-layout}
  \begin{tabular}{lccccc}
    \toprule
                           & Vert \# & Obst \% & Tun \# & Avg. L & Deg-3 \% \\ \midrule
    warehouse\_tiny        & 851      & 25.38     & 5         & 3        & 98.58      \\
    warehouse-10-20-10-2-1 & \textbf{10143}    & 43.81     & 38        & 8.2     & 56.76      \\
    warehouse\_small       & 4785     & 66.54     & 19        & \textbf{14.7}    & 82.82      \\
    warehouse\_large       & 8611     & \textbf{79.4}      & 78        & 13.2    & \textbf{43.12}      \\ \bottomrule
  \end{tabular}
\end{table}

The \texttt{warehouse\_tiny} map originates from the TP/TPTS study, whereas \texttt{warehouse\_small} and \texttt{warehouse\_large} are abstractions of real-world warehouse layouts.
The latter two feature longer tunnels and a higher obstacle density, making conflict resolution substantially harder.
When an agent engages in human interaction inside a tunnel it blocks the entire aisle.
Any robots within a tunnel must rely on nearby high-degree vertices (degree~$\ge 3$) to swap positions and resolve edge conflicts.
Consequently, the difficulty of solving a MAPF instance on a map is largely governed by three layout characteristics: the number of tunnels, their average length, and the number of degree-3 (or higher) vertices.

\subsubsection{Task Distribution}
Temporal and spatial task distributions significantly impact algorithm performance.
We collect real-world task sets from a warehouse management system, which records tasks of SKUs over time.
The original task sets contain a limited number of tasks and can be easily solved by CARVS and its variants, so they cannot distinguish the performance of different algorithms.
To enable a more comprehensive evaluation, we repeat each task from the original task sets 10 times to create larger task sets.
Tasks are then continuously added to the task pool over time following the original distribution.
The resulting spatial distribution is non-uniform, with certain areas having higher task densities than others.

\subsubsection{Human-Robot Ratio}
Human interaction increases the difficulty of the MAPDwHC problem, and algorithm performance can be affected by the ratio of humans to robots.
In our simulation, humans avoid collisions with robots and follow A* paths to reach their targets.
It is important to note that this behavior is confined to the simulation; in real-world applications, humans manually update their state changes in the system via a tablet.
Unlike traditional MAPD scenarios, boosting processing capacity requires scaling both the robots and the human workforce.
In our experiments, we fixed the number of human operators at 50, which nearly suffices for 100 robots.
Too few human operators reduce picking efficiency and lengthen the robots' idle periods, whereas too many operators increase their own waiting times.
In practice, we aim to keep human operators fully engaged, which usually means employing fewer humans than robots.
Because workforce size materially influences overall system performance, we hold the number of human operators constant across all experiments to allow a fair comparison of competing algorithms.
Human efficiency also impacts performance.
Human collaboration heightens the problem's difficulty relative to the traditional MAPD formulation, as discussed further in Appendix~\ref{app:human_eff}.

\section{Conclusion}

This paper formalizes the Multi-Agent Pickup and Delivery with Human Collaboration (MAPDwHC) problem, extending the traditional MAPD framework by integrating human agents and addressing the additional challenges introduced by human-robot interactions.
Although a MAPDwHC instance reduces to a sequence of MAPD problems when no interaction occurs, our empirical results confirm that MAPDwHC poses significantly greater computational challenges than MAPD in realistic warehouse settings.

To tackle these challenges, we propose CARVS, an approach that optimizes task assignment and path planning while accounting for congestion, human-robot interactions, and real-time adaptability.
We formally prove that CARVS can solve a well-formed MAPDwHC instance.
Extensive simulations demonstrate that our methods outperform existing approaches in terms of completed tasks, highlighting their potential for real-world applications.
This work advances multi-agent systems research and lays the foundation for future optimization of human-robot collaboration.
Beyond reducing planning timeouts, our results show that congestion-aware task assignment is also important.
Fast planning alone does not guarantee high task completion.
CARVS deliberately avoids optimizing task assignment and path planning as one model.
The joint search space of goal selection and collision-free paths is too large for the time budget of one step, and methods that couple the two, such as TP/TPTS and MCA/RMCA, break down on the complex maps.
Keeping the two levels separate is what lets the system meet its real-time requirement.
Future work could focus on incorporating order distribution as a parameter into the task assignment algorithm to improve robot efficiency.
Developing efficient partial-movement strategies when the planner times out is a promising direction for future work.

\begin{acks}
The authors would like to thank the anonymous reviewers for their valuable comments and suggestions that helped improve the quality of this paper.
\end{acks}

\bibliographystyle{ACM-Reference-Format}
\bibliography{MyLib}


\begin{thebibliography}{30}


\ifx \showCODEN    \undefined \def \showCODEN     #1{\unskip}     \fi
\ifx \showISBNx    \undefined \def \showISBNx     #1{\unskip}     \fi
\ifx \showISBNxiii \undefined \def \showISBNxiii  #1{\unskip}     \fi
\ifx \showISSN     \undefined \def \showISSN      #1{\unskip}     \fi
\ifx \showLCCN     \undefined \def \showLCCN      #1{\unskip}     \fi
\ifx \shownote     \undefined \def \shownote      #1{#1}          \fi
\ifx \showarticletitle \undefined \def \showarticletitle #1{#1}   \fi
\ifx \showURL      \undefined \def \showURL       {\relax}        \fi
\providecommand\bibfield[2]{#2}
\providecommand\bibinfo[2]{#2}
\providecommand\natexlab[1]{#1}
\providecommand\showeprint[2][]{arXiv:#2}
\makeatletter
\@ifundefined{NAT@parse@date}{}{\let\NAT@parse@date@orig\NAT@parse@date}
\@ifundefined{NAT@parse@date}{}{\def\NAT@parse@date#1#2#3#4#5#6@@{\NAT@parse@date@orig#1#2#3#4#5#6@@\def\NAT@tempyear{0000}\def\NAT@tempexlab{{?}}\ifx\NAT@year\NAT@tempyear\ifx\NAT@exlab\NAT@tempexlab\def\NAT@date{[n.\,d.]}\else\edef\NAT@date{[n.\,d.]\NAT@exlab}\fi\fi}}
\makeatother

\bibitem[Chen et~al\mbox{.}(2021)]%
        {chenIntegratedTaskAssignment2021}
\bibfield{author}{\bibinfo{person}{Zhe Chen}, \bibinfo{person}{Javier
  Alonso-Mora}, \bibinfo{person}{Xiaoshan Bai}, \bibinfo{person}{Daniel~D.
  Harabor}, {and} \bibinfo{person}{Peter~J. Stuckey}.}
  \bibinfo{year}{2021}\natexlab{}.
\newblock \showarticletitle{Integrated {Task} {Assignment} and {Path}
  {Planning} for {Capacitated} {Multi}-{Agent} {Pickup} and {Delivery}}.
\newblock \bibinfo{journal}{\emph{IEEE Robotics and Automation Letters}}
  \bibinfo{volume}{6}, \bibinfo{number}{3} (\bibinfo{date}{July}
  \bibinfo{year}{2021}), \bibinfo{pages}{5816--5823}.
\newblock
\showISSN{2377-3766}
\href{https://doi.org/10.1109/LRA.2021.3074883}{doi:\nolinkurl{10.1109/LRA.2021.3074883}}


\bibitem[Damani et~al\mbox{.}(2021)]%
        {damaniPRIMAL2PathfindingReinforcement2021}
\bibfield{author}{\bibinfo{person}{Mehul Damani}, \bibinfo{person}{Zhiyao Luo},
  \bibinfo{person}{Emerson Wenzel}, {and} \bibinfo{person}{Guillaume
  Sartoretti}.} \bibinfo{year}{2021}\natexlab{}.
\newblock \bibinfo{title}{{PRIMAL2}: {Pathfinding} via {Reinforcement} and
  {Imitation} {Multi}-{Agent} {Learning} -- {Lifelong}}.
\newblock
\href{https://doi.org/10.1109/LRA.2021.3062803}{doi:\nolinkurl{10.1109/LRA.2021.3062803}}


\bibitem[Duinkerken and Lodewijks(2015)]%
        {duinkerkenRoutingAGVsAutomated2015}
\bibfield{author}{\bibinfo{person}{Mark~B. Duinkerken} {and}
  \bibinfo{person}{Gabriel Lodewijks}.} \bibinfo{year}{2015}\natexlab{}.
\newblock \showarticletitle{Routing of {AGVs} on automated container
  terminals}. In \bibinfo{booktitle}{\emph{2015 {IEEE} 19th {International}
  {Conference} on {Computer} {Supported} {Cooperative} {Work} in {Design}
  ({CSCWD})}}. \bibinfo{publisher}{IEEE}, \bibinfo{address}{Calabria, Italy},
  \bibinfo{pages}{401--406}.
\newblock
\href{https://doi.org/10.1109/CSCWD.2015.7230993}{doi:\nolinkurl{10.1109/CSCWD.2015.7230993}}


\bibitem[Gong et~al\mbox{.}(2023)]%
        {gongSafeEfficientHuman2023}
\bibfield{author}{\bibinfo{person}{Xin Gong}, \bibinfo{person}{Tieniu Wang},
  \bibinfo{person}{Tingwen Huang}, {and} \bibinfo{person}{Yukang Cui}.}
  \bibinfo{year}{2023}\natexlab{}.
\newblock \showarticletitle{Toward {Safe} and {Efficient} {Human}–{Swarm}
  {Collaboration}: {A} {Hierarchical} {Multi}-{Agent} {Pickup} and {Delivery}
  {Framework}}.
\newblock \bibinfo{journal}{\emph{IEEE Transactions on Intelligent Vehicles}}
  \bibinfo{volume}{8}, \bibinfo{number}{2} (\bibinfo{date}{Feb.}
  \bibinfo{year}{2023}), \bibinfo{pages}{1664--1675}.
\newblock
\showISSN{2379-8904}
\href{https://doi.org/10.1109/TIV.2022.3172342}{doi:\nolinkurl{10.1109/TIV.2022.3172342}}


\bibitem[Hart et~al\mbox{.}(1968)]%
        {Hart1968Astar}
\bibfield{author}{\bibinfo{person}{Peter~E. Hart}, \bibinfo{person}{Nils~J.
  Nilsson}, {and} \bibinfo{person}{Bertram Raphael}.}
  \bibinfo{year}{1968}\natexlab{}.
\newblock \showarticletitle{A Formal Basis for the Heuristic Determination of
  Minimum Cost Paths}.
\newblock \bibinfo{journal}{\emph{IEEE Transactions on Systems Science and
  Cybernetics}} \bibinfo{volume}{4}, \bibinfo{number}{2}
  (\bibinfo{year}{1968}), \bibinfo{pages}{100--107}.
\newblock
\href{https://doi.org/10.1109/TSSC.1968.300136}{doi:\nolinkurl{10.1109/TSSC.1968.300136}}


\bibitem[Hönig et~al\mbox{.}(2018)]%
        {honigConflictbasedSearchOptimal2018}
\bibfield{author}{\bibinfo{person}{Wolfgang Hönig}, \bibinfo{person}{Scott
  Kiesel}, \bibinfo{person}{Andrew Tinka}, \bibinfo{person}{Joseph~W. Durham},
  {and} \bibinfo{person}{Nora Ayanian}.} \bibinfo{year}{2018}\natexlab{}.
\newblock \showarticletitle{Conflict-based search with optimal task
  assignment}. In \bibinfo{booktitle}{\emph{Proceedings of the 17th
  international conference on autonomous agents and {MultiAgent} systems}}
  \emph{(\bibinfo{series}{Aamas '18})}. \bibinfo{publisher}{International
  Foundation for Autonomous Agents and Multiagent Systems},
  \bibinfo{address}{Richland, SC}, \bibinfo{pages}{757--765}.
\newblock


\bibitem[Kou et~al\mbox{.}(2020)]%
        {kouIdleTimeOptimization2020}
\bibfield{author}{\bibinfo{person}{Ngai~Meng Kou}, \bibinfo{person}{Cheng
  Peng}, \bibinfo{person}{Hang Ma}, \bibinfo{person}{T.~K.~Satish Kumar}, {and}
  \bibinfo{person}{Sven Koenig}.} \bibinfo{year}{2020}\natexlab{}.
\newblock \showarticletitle{Idle {Time} {Optimization} for {Target}
  {Assignment} and {Path} {Finding} in {Sortation} {Centers}}.
\newblock \bibinfo{journal}{\emph{Proceedings of the AAAI Conference on
  Artificial Intelligence}} \bibinfo{volume}{34}, \bibinfo{number}{06}
  (\bibinfo{date}{April} \bibinfo{year}{2020}), \bibinfo{pages}{9925--9932}.
\newblock
\showISSN{2374-3468, 2159-5399}
\href{https://doi.org/10.1609/aaai.v34i06.6547}{doi:\nolinkurl{10.1609/aaai.v34i06.6547}}


\bibitem[Krnjaic et~al\mbox{.}(2022)]%
        {krnjaicScalableMultiAgentReinforcement2022}
\bibfield{author}{\bibinfo{person}{Aleksandar Krnjaic},
  \bibinfo{person}{Jonathan~D. Thomas}, \bibinfo{person}{Georgios Papoudakis},
  \bibinfo{person}{Lukas Schäfer}, \bibinfo{person}{Peter Börsting}, {and}
  \bibinfo{person}{Stefano~V. Albrecht}.} \bibinfo{year}{2022}\natexlab{}.
\newblock \bibinfo{title}{Scalable {Multi}-{Agent} {Reinforcement} {Learning}
  for {Warehouse} {Logistics} with {Robotic} and {Human} {Co}-{Workers}}.
\newblock
\href{https://doi.org/10.48550/arXiv.2212.11498}{doi:\nolinkurl{10.48550/arXiv.2212.11498}}


\bibitem[Lau and Sengupta(2022)]%
        {lauMultiAgentPickupDelivery2022}
\bibfield{author}{\bibinfo{person}{Tim Tsz-Kit Lau} {and}
  \bibinfo{person}{Biswa Sengupta}.} \bibinfo{year}{2022}\natexlab{}.
\newblock \bibinfo{title}{The {Multi}-{Agent} {Pickup} and {Delivery}
  {Problem}: {MAPF}, {MARL} and {Its} {Warehouse} {Applications}}.
\newblock
\href{https://doi.org/10.48550/arXiv.2203.07092}{doi:\nolinkurl{10.48550/arXiv.2203.07092}}


\bibitem[Li et~al\mbox{.}(2021a)]%
        {liAnytimeMultiAgentPath2021}
\bibfield{author}{\bibinfo{person}{Jiaoyang Li}, \bibinfo{person}{Zhe Chen},
  \bibinfo{person}{Daniel Harabor}, \bibinfo{person}{Peter~J. Stuckey}, {and}
  \bibinfo{person}{Sven Koenig}.} \bibinfo{year}{2021}\natexlab{a}.
\newblock \showarticletitle{Anytime {Multi}-{Agent} {Path} {Finding} via
  {Large} {Neighborhood} {Search}}. In \bibinfo{booktitle}{\emph{Proceedings of
  the Thirtieth International Joint Conference on Artificial Intelligence,
  {IJCAI-21}}}, Vol.~\bibinfo{volume}{4}. \bibinfo{publisher}{International
  Joint Conferences on Artificial Intelligence Organization},
  \bibinfo{address}{Montreal, Canada}, \bibinfo{pages}{4127--4135}.
\newblock
\href{https://doi.org/10.24963/ijcai.2021/568}{doi:\nolinkurl{10.24963/ijcai.2021/568}}


\bibitem[Li et~al\mbox{.}(2022)]%
        {liMAPFLNS2FastRepairing2022}
\bibfield{author}{\bibinfo{person}{Jiaoyang Li}, \bibinfo{person}{Zhe Chen},
  \bibinfo{person}{Daniel Harabor}, \bibinfo{person}{Peter~J. Stuckey}, {and}
  \bibinfo{person}{Sven Koenig}.} \bibinfo{year}{2022}\natexlab{}.
\newblock \showarticletitle{{MAPF}-{LNS2}: {Fast} {Repairing} for
  {Multi}-{Agent} {Path} {Finding} via {Large} {Neighborhood} {Search}}.
\newblock \bibinfo{journal}{\emph{Proceedings of the AAAI Conference on
  Artificial Intelligence}} \bibinfo{volume}{36}, \bibinfo{number}{9}
  (\bibinfo{date}{June} \bibinfo{year}{2022}), \bibinfo{pages}{10256--10265}.
\newblock
\showISSN{2374-3468}
\href{https://doi.org/10.1609/aaai.v36i9.21266}{doi:\nolinkurl{10.1609/aaai.v36i9.21266}}


\bibitem[Li et~al\mbox{.}(2021b)]%
        {liPairwiseSymmetryReasoning2021}
\bibfield{author}{\bibinfo{person}{Jiaoyang Li}, \bibinfo{person}{Daniel
  Harabor}, \bibinfo{person}{Peter~J. Stuckey}, \bibinfo{person}{Hang Ma},
  \bibinfo{person}{Graeme Gange}, {and} \bibinfo{person}{Sven Koenig}.}
  \bibinfo{year}{2021}\natexlab{b}.
\newblock \showarticletitle{Pairwise symmetry reasoning for multi-agent path
  finding search}.
\newblock \bibinfo{journal}{\emph{Artificial Intelligence}}
  \bibinfo{volume}{301} (\bibinfo{date}{Dec.} \bibinfo{year}{2021}),
  \bibinfo{pages}{103574}.
\newblock
\showISSN{0004-3702}
\href{https://doi.org/10.1016/j.artint.2021.103574}{doi:\nolinkurl{10.1016/j.artint.2021.103574}}


\bibitem[Li et~al\mbox{.}(2021c)]%
        {liEECBSBoundedSuboptimalSearch2021}
\bibfield{author}{\bibinfo{person}{Jiaoyang Li}, \bibinfo{person}{Wheeler
  Ruml}, {and} \bibinfo{person}{Sven Koenig}.}
  \bibinfo{year}{2021}\natexlab{c}.
\newblock \bibinfo{title}{{EECBS}: {A} {Bounded}-{Suboptimal} {Search} for
  {Multi}-{Agent} {Path} {Finding}}.
\newblock
\href{https://doi.org/10.48550/arXiv.2010.01367}{doi:\nolinkurl{10.48550/arXiv.2010.01367}}


\bibitem[Li et~al\mbox{.}(2021d)]%
        {liLifelongMultiAgentPath2021}
\bibfield{author}{\bibinfo{person}{Jiaoyang Li}, \bibinfo{person}{Andrew
  Tinka}, \bibinfo{person}{Scott Kiesel}, \bibinfo{person}{Joseph~W. Durham},
  \bibinfo{person}{T.~K.~Satish Kumar}, {and} \bibinfo{person}{Sven Koenig}.}
  \bibinfo{year}{2021}\natexlab{d}.
\newblock \bibinfo{title}{Lifelong {Multi}-{Agent} {Path} {Finding} in
  {Large}-{Scale} {Warehouses}}.
\newblock
\href{https://doi.org/10.48550/arXiv.2005.07371}{doi:\nolinkurl{10.48550/arXiv.2005.07371}}


\bibitem[Liu et~al\mbox{.}(2019)]%
        {liuTaskPathPlanning2019}
\bibfield{author}{\bibinfo{person}{Minghua Liu}, \bibinfo{person}{Hang Ma},
  \bibinfo{person}{Jiaoyang Li}, {and} \bibinfo{person}{Sven Koenig}.}
  \bibinfo{year}{2019}\natexlab{}.
\newblock \showarticletitle{Task and path planning for multi-agent pickup and
  delivery}. In \bibinfo{booktitle}{\emph{Proceedings of the 18th
  {International} {Conference} on {Autonomous} {Agents} and {MultiAgent}
  {Systems}}} \emph{(\bibinfo{series}{{AAMAS} '19})}.
  \bibinfo{publisher}{International Foundation for Autonomous Agents and
  Multiagent Systems}, \bibinfo{address}{Richland, SC},
  \bibinfo{pages}{1152--1160}.
\newblock
\showISBNx{978-1-4503-6309-9}


\bibitem[Lodigiani et~al\mbox{.}(2023)]%
        {lodigianiRobustMultiagentPickup2023}
\bibfield{author}{\bibinfo{person}{Giacomo Lodigiani}, \bibinfo{person}{Nicola
  Basilico}, {and} \bibinfo{person}{Francesco Amigoni}.}
  \bibinfo{year}{2023}\natexlab{}.
\newblock \showarticletitle{Robust Multi-Agent Pickup and Delivery with
  Delays}. In \bibinfo{booktitle}{\emph{2023 {{European Conference}} on
  {{Mobile Robots}} ({{ECMR}})}} (2023-09). \bibinfo{publisher}{IEEE},
  \bibinfo{address}{Coimbra, Portugal}, \bibinfo{pages}{1--8}.
\newblock
\showISSN{2767-8733}
\href{https://doi.org/10.1109/ECMR59166.2023.10256381}{doi:\nolinkurl{10.1109/ECMR59166.2023.10256381}}


\bibitem[Loganathan and Ahmad(2023)]%
        {loganathanSystematicReviewRecent2023}
\bibfield{author}{\bibinfo{person}{Anbalagan Loganathan} {and}
  \bibinfo{person}{Nur~Syazreen Ahmad}.} \bibinfo{year}{2023}\natexlab{}.
\newblock \showarticletitle{A systematic review on recent advances in
  autonomous mobile robot navigation}.
\newblock \bibinfo{journal}{\emph{Engineering Science and Technology, an
  International Journal}}  \bibinfo{volume}{40} (\bibinfo{date}{April}
  \bibinfo{year}{2023}), \bibinfo{pages}{101343}.
\newblock
\showISSN{2215-0986}
\href{https://doi.org/10.1016/j.jestch.2023.101343}{doi:\nolinkurl{10.1016/j.jestch.2023.101343}}


\bibitem[Ma et~al\mbox{.}(2017)]%
        {maLifelongMultiagentPath2017}
\bibfield{author}{\bibinfo{person}{Hang Ma}, \bibinfo{person}{Jiaoyang Li},
  \bibinfo{person}{T.K.~Satish Kumar}, {and} \bibinfo{person}{Sven Koenig}.}
  \bibinfo{year}{2017}\natexlab{}.
\newblock \showarticletitle{Lifelong multi-agent path finding for online pickup
  and delivery tasks}. In \bibinfo{booktitle}{\emph{Proceedings of the 16th
  conference on autonomous agents and {MultiAgent} systems}}
  \emph{(\bibinfo{series}{Aamas '17})}. \bibinfo{publisher}{International
  Foundation for Autonomous Agents and Multiagent Systems},
  \bibinfo{address}{Richland, SC}, \bibinfo{pages}{837--845}.
\newblock


\bibitem[Ma et~al\mbox{.}(2016)]%
        {maMultiAgentPathFinding2016}
\bibfield{author}{\bibinfo{person}{Hang Ma}, \bibinfo{person}{Craig Tovey},
  \bibinfo{person}{Guni Sharon}, \bibinfo{person}{T.~K. Kumar}, {and}
  \bibinfo{person}{Sven Koenig}.} \bibinfo{year}{2016}\natexlab{}.
\newblock \showarticletitle{Multi-{Agent} {Path} {Finding} with {Payload}
  {Transfers} and the {Package}-{Exchange} {Robot}-{Routing} {Problem}}.
\newblock \bibinfo{journal}{\emph{Proceedings of the AAAI Conference on
  Artificial Intelligence}} \bibinfo{volume}{30}, \bibinfo{number}{1}
  (\bibinfo{date}{March} \bibinfo{year}{2016}), \bibinfo{pages}{3166--3173}.
\newblock
\href{https://doi.org/10.1609/aaai.v30i1.10409}{doi:\nolinkurl{10.1609/aaai.v30i1.10409}}


\bibitem[Nguyen et~al\mbox{.}(2017)]%
        {nguyenGeneralizedTargetAssignment2017}
\bibfield{author}{\bibinfo{person}{Van Nguyen}, \bibinfo{person}{Philipp
  Obermeier}, \bibinfo{person}{Tran~Cao Son}, \bibinfo{person}{Torsten Schaub},
  {and} \bibinfo{person}{William Yeoh}.} \bibinfo{year}{2017}\natexlab{}.
\newblock \showarticletitle{Generalized target assignment and path finding
  using answer set programming}. In \bibinfo{booktitle}{\emph{Proceedings of
  the twenty-sixth international joint conference on artificial intelligence,
  {IJCAI}-17}}. \bibinfo{publisher}{International Joint Conferences on
  Artificial Intelligence Organization}, \bibinfo{address}{Melbourne,
  Australia}, \bibinfo{pages}{1216--1223}.
\newblock
\href{https://doi.org/10.24963/ijcai.2017/169}{doi:\nolinkurl{10.24963/ijcai.2017/169}}


\bibitem[Okumura(2022)]%
        {okumuraLaCAMSearchBasedAlgorithm2022}
\bibfield{author}{\bibinfo{person}{Keisuke Okumura}.}
  \bibinfo{year}{2022}\natexlab{}.
\newblock \bibinfo{title}{{LaCAM}: {Search}-{Based} {Algorithm} for {Quick}
  {Multi}-{Agent} {Pathfinding}}.
\newblock
\href{https://doi.org/10.48550/arXiv.2211.13432}{doi:\nolinkurl{10.48550/arXiv.2211.13432}}


\bibitem[Okumura(2023)]%
        {okumuraImprovingLaCAMScalable2023}
\bibfield{author}{\bibinfo{person}{Keisuke Okumura}.}
  \bibinfo{year}{2023}\natexlab{}.
\newblock \bibinfo{title}{Improving {LaCAM} for {Scalable} {Eventually}
  {Optimal} {Multi}-{Agent} {Pathfinding}}.
\newblock
\href{https://doi.org/10.48550/arXiv.2305.03632}{doi:\nolinkurl{10.48550/arXiv.2305.03632}}


\bibitem[Okumura and D{\'{e}}fago(2022)]%
        {okumuraSolvingSimultaneousTarget2022}
\bibfield{author}{\bibinfo{person}{Keisuke Okumura} {and}
  \bibinfo{person}{Xavier D{\'{e}}fago}.} \bibinfo{year}{2022}\natexlab{}.
\newblock \showarticletitle{Solving Simultaneous Target Assignment and Path
  Planning Efficiently with Time-Independent Execution}. In
  \bibinfo{booktitle}{\emph{Proceedings of the Thirty-Second International
  Conference on Automated Planning and Scheduling, {ICAPS} 2022, Singapore
  (virtual), June 13-24, 2022}}, \bibfield{editor}{\bibinfo{person}{Akshat
  Kumar}, \bibinfo{person}{Sylvie Thi{\'{e}}baux}, \bibinfo{person}{Pradeep
  Varakantham}, {and} \bibinfo{person}{William Yeoh}} (Eds.).
  \bibinfo{publisher}{{AAAI} Press}, \bibinfo{pages}{270--278}.
\newblock
\href{https://doi.org/10.1609/icaps.v32i1.19810}{doi:\nolinkurl{10.1609/icaps.v32i1.19810}}


\bibitem[Sharon et~al\mbox{.}(2012a)]%
        {sharonConflictBasedSearchOptimal2012}
\bibfield{author}{\bibinfo{person}{Guni Sharon}, \bibinfo{person}{Roni Stern},
  \bibinfo{person}{Ariel Felner}, {and} \bibinfo{person}{Nathan Sturtevant}.}
  \bibinfo{year}{2012}\natexlab{a}.
\newblock \showarticletitle{Conflict-{Based} {Search} {For} {Optimal}
  {Multi}-{Agent} {Path} {Finding}}.
\newblock \bibinfo{journal}{\emph{Proceedings of the AAAI Conference on
  Artificial Intelligence}} \bibinfo{volume}{26}, \bibinfo{number}{1}
  (\bibinfo{year}{2012}), \bibinfo{pages}{563--569}.
\newblock
\showISSN{2374-3468}
\href{https://doi.org/10.1609/aaai.v26i1.8140}{doi:\nolinkurl{10.1609/aaai.v26i1.8140}}


\bibitem[Sharon et~al\mbox{.}(2012b)]%
        {sharonMetaAgentConflictBasedSearch2012}
\bibfield{author}{\bibinfo{person}{Guni Sharon}, \bibinfo{person}{Roni Stern},
  \bibinfo{person}{Ariel Felner}, {and} \bibinfo{person}{Nathan Sturtevant}.}
  \bibinfo{year}{2012}\natexlab{b}.
\newblock \showarticletitle{Meta-{Agent} {Conflict}-{Based} {Search} {For}
  {Optimal} {Multi}-{Agent} {Path} {Finding}}.
\newblock \bibinfo{journal}{\emph{Proceedings of the International Symposium on
  Combinatorial Search}} \bibinfo{volume}{3}, \bibinfo{number}{1}
  (\bibinfo{year}{2012}), \bibinfo{pages}{97--104}.
\newblock
\showISSN{2832-9163}
\href{https://doi.org/10.1609/socs.v3i1.18244}{doi:\nolinkurl{10.1609/socs.v3i1.18244}}


\bibitem[Shen et~al\mbox{.}(2023)]%
        {shenTrackingProgressMultiAgent2023}
\bibfield{author}{\bibinfo{person}{Bojie Shen}, \bibinfo{person}{Zhe Chen},
  \bibinfo{person}{Muhammad~Aamir Cheema}, \bibinfo{person}{Daniel~D. Harabor},
  {and} \bibinfo{person}{Peter~J. Stuckey}.} \bibinfo{year}{2023}\natexlab{}.
\newblock \bibinfo{title}{Tracking {Progress} in {Multi}-{Agent} {Path}
  {Finding}}.
\newblock
\href{https://doi.org/10.48550/arXiv.2305.08446}{doi:\nolinkurl{10.48550/arXiv.2305.08446}}


\bibitem[Stern et~al\mbox{.}(2019)]%
        {sternMultiAgentPathfindingDefinitions2019}
\bibfield{author}{\bibinfo{person}{Roni Stern}, \bibinfo{person}{Nathan
  Sturtevant}, \bibinfo{person}{Ariel Felner}, \bibinfo{person}{Sven Koenig},
  \bibinfo{person}{Hang Ma}, \bibinfo{person}{Thayne Walker},
  \bibinfo{person}{Jiaoyang Li}, \bibinfo{person}{Dor Atzmon},
  \bibinfo{person}{Liron Cohen}, \bibinfo{person}{T.~K.~Satish Kumar},
  \bibinfo{person}{Eli Boyarski}, {and} \bibinfo{person}{Roman Bartak}.}
  \bibinfo{year}{2019}\natexlab{}.
\newblock \showarticletitle{Multi-{Agent} {Pathfinding}: {Definitions},
  {Variants}, and {Benchmarks}}.
\newblock \bibinfo{journal}{\emph{arXiv:1906.08291 [cs]}} (\bibinfo{date}{June}
  \bibinfo{year}{2019}).
\newblock
\urldef\tempurl%
\url{http://arxiv.org/abs/1906.08291}
\showURL{%
\tempurl}


\bibitem[Wang et~al\mbox{.}(2023)]%
        {wangOrderMattersAgentbyagent2023}
\bibfield{author}{\bibinfo{person}{Xihuai Wang}, \bibinfo{person}{Zheng Tian},
  \bibinfo{person}{Ziyu Wan}, \bibinfo{person}{Ying Wen}, \bibinfo{person}{Jun
  Wang}, {and} \bibinfo{person}{Weinan Zhang}.}
  \bibinfo{year}{2023}\natexlab{}.
\newblock \showarticletitle{Order {Matters}: {Agent}-by-agent {Policy}
  {Optimization}}. In \bibinfo{booktitle}{\emph{The Eleventh International
  Conference on Learning Representations, {ICLR} 2023, Kigali, Rwanda, May 1-5,
  2023}}. \bibinfo{publisher}{OpenReview.net}.
\newblock
\urldef\tempurl%
\url{https://openreview.net/forum?id=Q-neeWNVv1}
\showURL{%
\tempurl}


\bibitem[Xu et~al\mbox{.}(2022)]%
        {xuMultigoalMultiagentPickup2022}
\bibfield{author}{\bibinfo{person}{Qinghong Xu}, \bibinfo{person}{Jiaoyang Li},
  \bibinfo{person}{Sven Koenig}, {and} \bibinfo{person}{Hang Ma}.}
  \bibinfo{year}{2022}\natexlab{}.
\newblock \showarticletitle{Multi-goal multi-agent pickup and delivery}. In
  \bibinfo{booktitle}{\emph{2022 {IEEE}/{RSJ} international conference on
  intelligent robots and systems ({IROS})}}. \bibinfo{publisher}{IEEE},
  \bibinfo{address}{Kyoto, Japan}, \bibinfo{pages}{9964--9971}.
\newblock
\href{https://doi.org/10.1109/IROS47612.2022.9981785}{doi:\nolinkurl{10.1109/IROS47612.2022.9981785}}


\bibitem[Zhang et~al\mbox{.}(2018)]%
        {zhangCollisionfreeRoutePlanning2018}
\bibfield{author}{\bibinfo{person}{Zheng Zhang}, \bibinfo{person}{Qing Guo},
  \bibinfo{person}{Juan Chen}, {and} \bibinfo{person}{Peijiang Yuan}.}
  \bibinfo{year}{2018}\natexlab{}.
\newblock \showarticletitle{Collision-free route planning for multiple {AGVs}
  in an automated warehouse based on collision classification}.
\newblock \bibinfo{journal}{\emph{IEEE access : practical innovations, open
  solutions}}  \bibinfo{volume}{6} (\bibinfo{year}{2018}),
  \bibinfo{pages}{26022--26035}.
\newblock
\href{https://doi.org/10.1109/ACCESS.2018.2819199}{doi:\nolinkurl{10.1109/ACCESS.2018.2819199}}


\end{thebibliography}

\clearpage
\appendix

\section{Real-world Demonstration}
\label{app:photographs}

Our MAPDwHC formulation is inspired by modern e-commerce fulfillment centers that stock thousands of stock-keeping units (SKUs).
Because the shelving is non-standardized and items vary widely in size, shape, and packaging, current robotic grippers cannot manipulate every product with high reliability.
As a result, human workers perform the \textit{pick} and \textit{pack} operations, while fleets of mobile robots ferry plastic totes between storage aisles and packing stations.
Each robot is equipped with on-board LiDAR, localizes itself using SLAM, and continuously transmits its position to a central planner.
The robots reliably detect and avoid nearby humans, and workers can easily step through the gap between a robot and the shelving; hence collisions between humans and robots are negligible.
Accordingly, our MAPDwHC model assumes that human-robot collisions do not occur.

\begin{figure}[!t]
  \centering
  \includegraphics[width=\columnwidth]{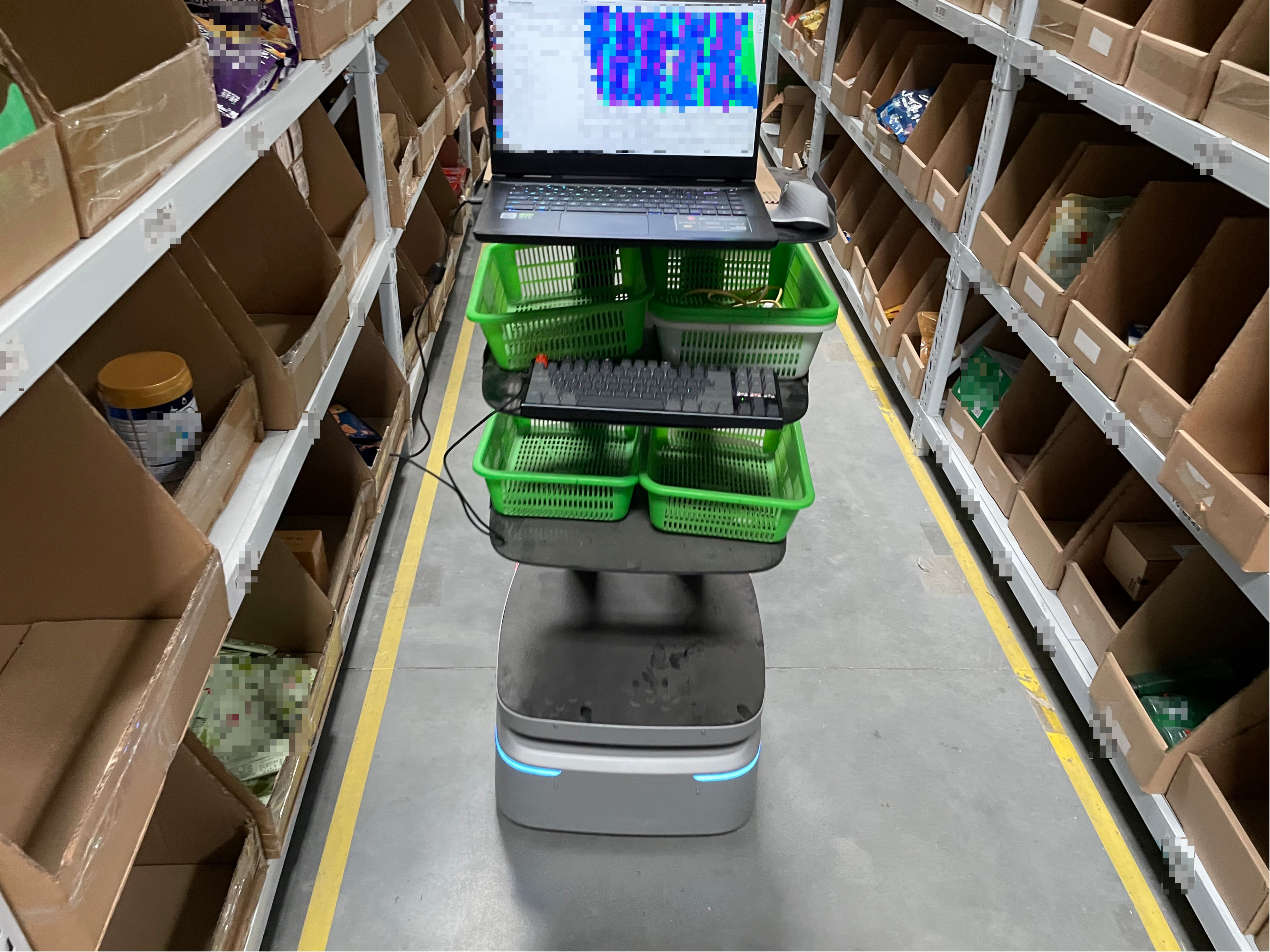}
  \caption{The autonomous mobile robots operating in the warehouse transport plastic totes that serve as carriers for customer orders. Each robot can carry multiple SKUs at once. Because the robots are not equipped with grippers, human workers manually place items into the totes. The merchandise is stored on shelving units and exhibits considerable variation in size, shape, and packaging.}
  \label{fig:real_robot}
  \Description{The autonomous mobile robots operating in the warehouse transport plastic totes that serve as carriers for customer orders. Each robot can carry multiple SKUs at once. Because the robots are not equipped with grippers, human workers manually place items into the totes. The merchandise is stored on shelving units and exhibits considerable variation in size, shape, and packaging.}
\end{figure}

\begin{figure}[!t]
  \centering
  \includegraphics[width=\columnwidth]{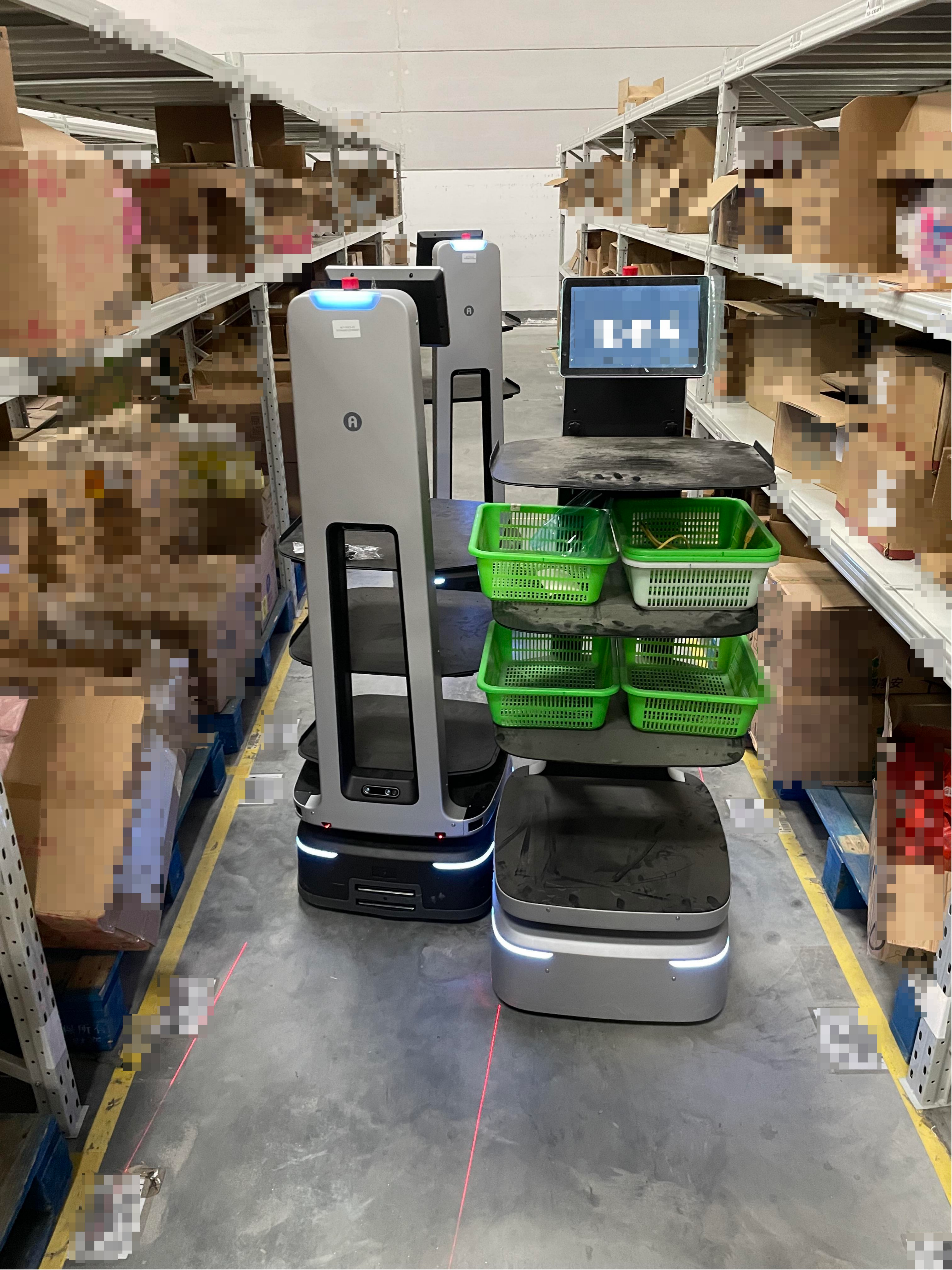}
  \caption{Coordination in narrow aisles. During the picking phase, the tote-carrying robot in the foreground remains stationary, allowing the human picker to quickly place items into its bins. Two autonomous robots in the background detect this stationary robot and navigate around it while maintaining a safe clearance. Accordingly, the tote-carrying robot is treated as a static obstacle in our MAPDwHC formulation during human-robot interaction.}
  \label{fig:robot_yielding}
  \Description{Coordination in narrow aisles. During the picking phase, the tote-carrying robot in the foreground remains stationary, allowing the human picker to quickly place items into its bins. Two autonomous robots in the background detect this stationary robot and navigate around it while maintaining a safe clearance. Accordingly, the tote-carrying robot is treated as a static obstacle in our MAPDwHC formulation during human-robot interaction.}
\end{figure}

The photographs in Figures~\ref{fig:real_robot} and \ref{fig:robot_yielding} highlight the practical constraints shaping our design: narrow aisles, close human-robot interaction, and limited payload capacity.
Collectively, these challenges motivate the fast, reactive algorithms detailed in the main paper.

\section{Extended Proof of Theorem~\ref{th:1}}
\label{app:extended_proof}
\begin{proof}
    At any time step $t$, we select a subset of at most $|A|$ pending tasks and assign each one to a distinct idle agent ($c_i < \mathcal{K}$). When the number of pending tasks exceeds $|A|$, the remaining tasks are deferred to subsequent time steps.
    Treating every agent's current vertex as its start and the corresponding pickup or delivery vertex as its goal yields a valid MAPF instance.

    When no human-robot interaction is occurring, all robots are free to move.
    Property~\ref{property:map_connectivity} guarantees that every goal is reachable from any vertex, and Property~\ref{property:map_layout} lets us swap agents via the buffer vertices in $V_F$ whenever a path is blocked.
    Together, these properties ensure that a complete MAPF solver can always find a collision-free plan.

    If a human-robot interaction is underway, we remove the interacting robot's occupied vertex from $G$ to obtain the reduced graph $G'=(V \setminus \{v_t^{a_i}\}, E')$.
    This deletion may temporarily violate Property~\ref{property:map_connectivity} or Property~\ref{property:map_layout}, rendering the MAPF instance unsolvable.
    However, Property~\ref{property:finite_interaction} bounds the duration of every interaction, so the blocked vertex is released after finite time steps and then the connectivity is restored.
    All agents can simply wait until the interaction ends.

    Consequently, every task in $\mathcal{T}$ is eventually assigned to an agent, and the set of tasks selected at each iteration constitutes a solvable MAPF sub-problem. Every task is completed within finite time.
    Thus, any well-formed MAPDwHC instance is solvable.
\end{proof}

\section{Comparison of MAPF Algorithms}
\label{app:mapf}

Table~\ref{tab:algo_performance} compares the performance of different MAPF algorithms on the \texttt{warehouse\_large} map with varying agent counts, using the CARVS over 10,000 time steps.
CBSH2RTC fails with 60 and 80 agents.
The memory usage of CBSH2RTC will increase tremendously (e.g. 200GB for 60 agents), eventually terminated by the operating system.
Both CBSH2RTC, EECBS and LaCAM struggle as agent numbers increase.
LaCAM2 outperforms the others in completed tasks, making it our preferred low-level MAPF algorithm.
Although MAPFLNS2 and RHCR can successfully solve the problem, their performance is slightly below that of LaCAM2 as the number of agents grows.

All MAPF algorithms were run with their default parameters unless stated otherwise.
Specifically, EECBS uses a suboptimality factor of 1.2, and RHCR is configured with a simulation time of 5, a simulation window of 1, and a planning window of 10.

\begin{table}[!t]
  \centering
  \caption{MAPF Algorithms Performance by Number of Agents}
  \begin{tabular}{lcc}
  \hline
  Algorithm & Agents & Tasks Completed \\
  \hline
  \multirow{3}{*}{CBSH2RTC} & 40 & 25±70 \\
                           & 60 & 0 \\
                           & 80 & 0 \\
  \hline
  \multirow{3}{*}{EECBS}   & 40 & 4,938±4,310 \\
                           & 60 & 775±820 \\
                           & 80 & 101±215 \\
  \hline
  \multirow{3}{*}{LaCAM}   & 40 & 3,166±630 \\
                           & 60 & 1,355±865 \\
                           & 80 & 360±300 \\
  \hline
  \multirow{3}{*}{LaCAM2}  & 40 & 8,891±260 \\
                           & 60 & 11,731±570 \\
                           & 80 & \textbf{13,966±455} \\
  \hline
  \multirow{3}{*}{MAPFLNS2}& 40 & 9,060±260 \\
                           & 60 & 10,927±430 \\
                           & 80 & 12,150±280 \\
  \hline
  \multirow{3}{*}{RHCR}    & 40 & 10,033±230 \\
                           & 60 & 11,827±415 \\
                           & 80 & 12,020±760 \\
  \hline
  \end{tabular}
  \label{tab:algo_performance}
\end{table}

\section{Details of Compared Algorithms}
\label{appendix:tpts_adaptation}

We modified TP/TPTS and RMCA/MCA so that they operate under our MAPDwHC assumptions while preserving their original decision logic.
First, TP/TPTS now runs in an online manner: at every time step it returns a partial plan that ends when the first task of the current task sets finishes, rather than waiting until all tasks are completed.
The simulator verifies this plan, manages the pick-up and drop-off states, and injects the drop-off vertices. Each drop-off task carries an agent-specific ID to prevent reassignment.

RMCA/MCA constructs a heap of all tasks and agents at the start of the simulation before producing the paths.
We trigger a global replan as soon as any agent reaches its goal. Drop-off locations are also pre-assigned to the corresponding agents and treated as exclusive goals.

Whenever an agent engages in human interaction its current vertex is marked temporarily as an obstacle.
Our simulation removes that agent from the active set and alters the graph supplied to the solver.
Any change in task status or human-agent interaction immediately triggers a new round of task allocation and path planning, ensuring that the planners face the same dynamic conditions as CARVS and the other algorithms.
These adaptations allow TP, TPTS, RMCA, and MCA to be compared fairly with the proposed methods in realistic warehouse scenarios that involve human-robot collaboration.

TSWAP was originally designed for simultaneous assignment to unlabeled targets.
In our MAPD setting, any robot can be assigned to a pickup target, but a robot carrying an item must reach its assigned drop-off target.
To apply TSWAP in this setting, the wrapper supplies current pickup targets to picking robots and assigned drop-off targets to dropping robots when constructing each planning instance.
Each drop-off target carries the corresponding agent index, so the initial target allocation preserves the assignment.
However, TSWAP handles conflicts by swapping targets and breaks deadlocks by rotating them.
These operations change existing task assignments, so TSWAP cannot maintain a fixed assignment between a robot and its target during execution.
A dropping robot may therefore be redirected toward another drop-off target or a pickup target.
The wrapper can only reapply the original assignment in a later planning attempt.

Diagnostic results on \texttt{warehouse\_large} with 100 robots show how these target exchanges affect task completion.
Only 0.73\% of successful paths for dropping robots ended at the assigned drop-off target.
Instead, 44.94\% ended at another robot's drop-off target and 54.33\% at a pickup target.
Neither pathfinding failures nor solver execution errors were observed.
Robots spent an average of 45.92\% of their time in the drop-off phase.
The median duration of a completed drop-off averaged 728 steps across runs.
When the simulation ended, an average of 48.6 robots had not completed their drop-offs.
These results show that fast planning and a low skip count do not guarantee efficient task completion.

MAPD-d exposes a robustness parameter $k$, which stands for the number of delay steps an agent must absorb without replanning.
However, our setting provides no reliable value for it.
Although the interaction time is set to 10 time steps in the simulations, the picker still needs time to walk to the picking point, and that walking time varies with the distance.
So the delay an agent experiences is unknown in advance.
We therefore report $k=0$ in the main results.
We also ran $k=10$ for comparison.
On \texttt{warehouse\_small} it completes 989 tasks at 20 agents against 3703 for $k=0$, and it completes no tasks at all beyond 40 agents.
The reason is that $k$-TP reserves a temporal window of $\pm k$ around every cell on an already planned path, so each agent contributes $2k+1$ space-time reservations instead of one.
A larger $k$ therefore inflates the set of moving obstacles the low-level search must avoid, which raises the planning time and makes the real-time budget harder to meet.

\section{Human Efficiency}
\label{app:human_eff}

Figure~\ref{fig:es_not_found} shows how human-efficiency settings affect the number of skips on \texttt{warehouse\_small} as the number of robots increases from 10 to 100.
With the reaction time fixed at $\eta = 1\,\text{s}$, reducing the interaction duration $\delta$ and increasing human walking speed both lower the skip count and thus boost overall performance.
For example, configuration \texttt{D5S2} imposes a 5-step interaction delay while letting humans move twice as fast as robots, whereas \texttt{D0S100} approximates the original MAPD problem setting by introducing virtually no delay and letting humans move far faster than the robots.
As $\delta$ decreases the skip count falls sharply.
In contrast, changes in human walking speed have little effect, with the skip count remaining nearly unchanged between \texttt{D0S2} and \texttt{D0S100}.
Comparing \texttt{D10S2}, the configuration used in our experiments, with \texttt{D0S2} shows that MAPDwHC records about $1.96\times$ more skips than MAPD, highlighting the extra difficulty introduced by human-robot interaction.

\begin{figure}
  \centering
  \includegraphics[width=0.95\linewidth]{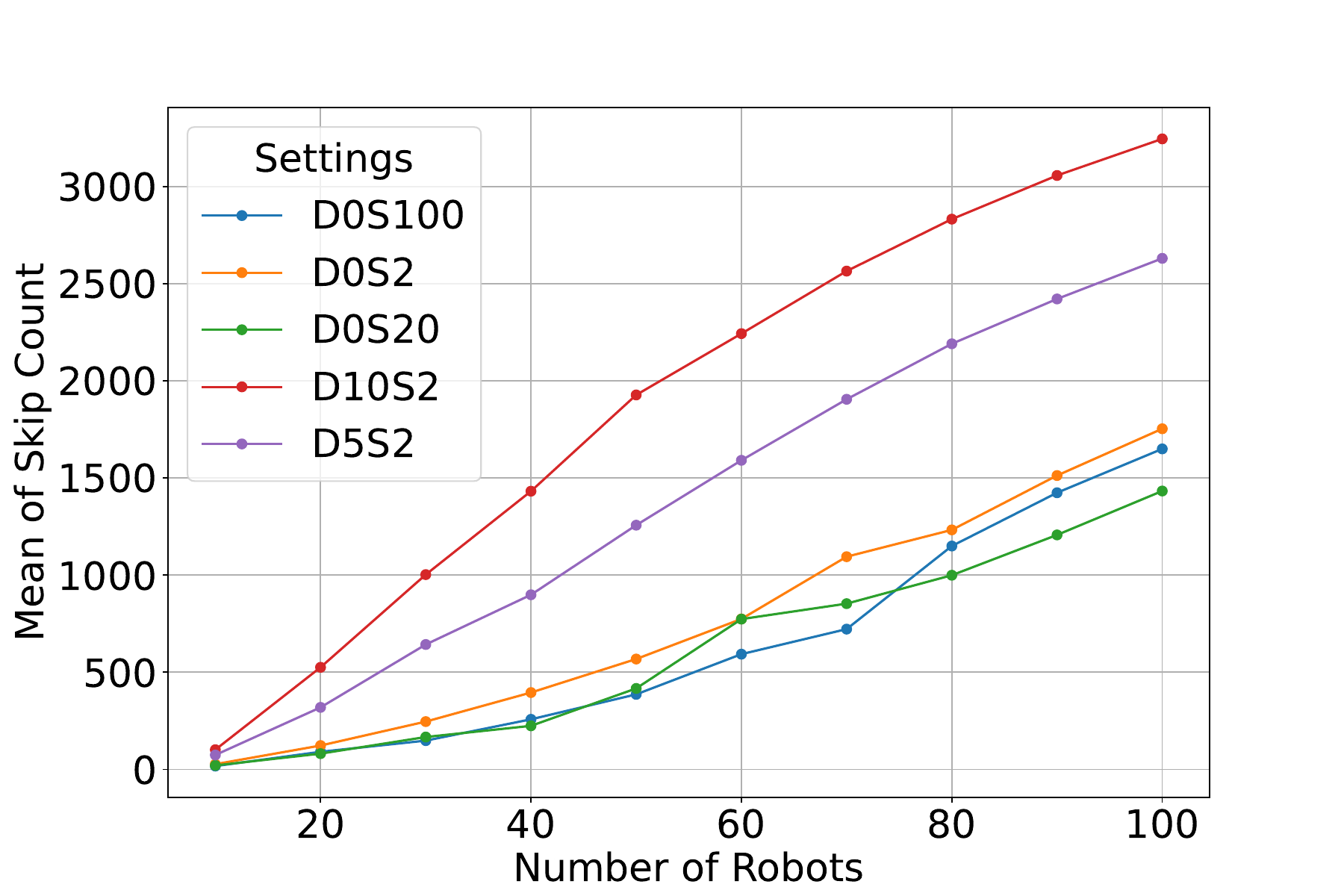}
  \caption{Impact of human-efficiency parameters on the performance of CARVS with LaCAM2 on \texttt{warehouse\_small} for 10 to 100 robots. The reaction time $\eta$ is fixed at 1 second. Configuration \texttt{D5S2} sets the interaction duration to $\delta = 5$ time steps and assigns humans a speed twice that of the robots. Reducing the interaction duration substantially decreases the number of skips.}
  \Description{Impact of human-efficiency parameters on the performance of CARVS with LaCAM2 on \texttt{warehouse\_small} for 10 to 100 robots. The reaction time $\eta$ is fixed at 1 second. Configuration \texttt{D5S2} sets the interaction duration to $\delta = 5$ time steps and assigns humans a speed twice that of the robots. Reducing the interaction duration substantially decreases the number of skips.}
  \label{fig:es_not_found}
\end{figure}

Figure~\ref{fig:human_num} shows how the number of tasks completed by CARVS on \texttt{warehouse\_small} varies as the human workforce increases from 10 to 100 workers, with either 50 or 100 robots.
Task completion rises almost linearly up to about 50 workers, the configuration used in our experiments.
Beyond that point, each additional worker provides progressively smaller gains.

\begin{figure}
  \centering
  \includegraphics[width=0.95\linewidth]{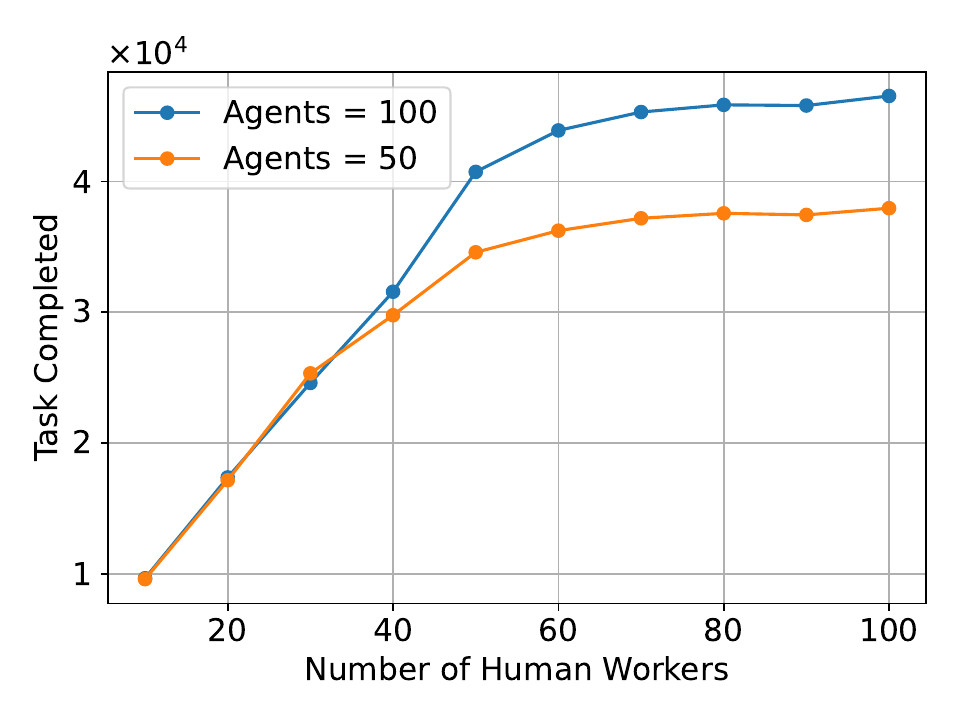}
  \caption{Impact of human workforce size on CARVS performance on \texttt{warehouse\_small} with 50 and 100 robots. After about 50 workers, each additional worker yields progressively smaller increases in the number of completed tasks.}
  \Description{Impact of human workforce size on CARVS performance on \texttt{warehouse\_small} with 50 and 100 robots. After about 50 workers, each additional worker yields progressively smaller increases in the number of completed tasks.}
  \label{fig:human_num}
\end{figure}

\section{Data and Ethics Statement}

The warehouse layouts, task records, and facility photographs used in this work were obtained through an industrial collaboration. The study involves no human subjects and no personally identifiable information. The warehouse-management-system records consist solely of stock-keeping-unit (SKU) task events over time rather than data about individuals. Each SKU appears only as an index number that distinguishes one item from another, not as its real identifier. The task sets used in our experiments are aggregate temporal distributions together with synthetically generated tasks derived from those distributions, not raw operational records. The photographs contain no people and no product information, because the product imagery in them has been pixelated. No personally identifiable information was collected, processed, or released as part of this work.

\end{document}